\def\radius{2.5cm}
\def\radiusoffset{5pt}
\def\delangle{12}
\definecolor{node_col}{RGB}{0,0,180}
\definecolor{demand_col}{RGB}{180,0,0}
\definecolor{split_col_u}{RGB}{255,0,0}
\definecolor{split_col_v}{RGB}{0,255,0}
\definecolor{load_col}{RGB}{0,0,255}
\tikzset{
mynode/.style={draw, circle, fill = node_col, text height = 0pt, text width = 0pt, text depth = 0pt, inner sep = 0pt, minimum size = 5pt},
split_arc/.style={draw, very thick,-{latex}},
mydemand/.style={thick, draw = demand_col},
}
\DeclareMathOperator*{\argmin}{arg\,min}
\newtheorem{theorem}{Theorem}
\newtheorem{definition}{Definition}
\newtheorem{lemma}{Lemma}
\newtheorem{observation}{Observation}
\newcommand{\NN}{\mathbb{N}}
\newcommand{\RR}{\mathbb{R}}
\author{Karl D\"aubel}
\title{An Improved Upper Bound for the Ring Loading Problem}
\affil{Institut f\"ur Mathematik, Technische Universit\"at Berlin, Germany \\
  {\small \texttt{daeubel@math.tu-berlin.de}}}
\date{}
\begin{document}
\maketitle

\begin{abstract}
The \emph{Ring Loading Problem} emerged in the 1990s to model an important special case of telecommunication networks (SONET rings) which gained attention from practitioners and theorists alike. Given an undirected cycle on $n$ nodes together with non-negative demands between any pair of nodes, the \emph{Ring Loading Problem} asks for an unsplittable routing of the demands such that the maximum cumulated demand on any edge is minimized. Let $L$ be the value of such a solution. In the relaxed version of the problem, each demand can be split into two parts where the first part is routed clockwise while the second part is routed counter-clockwise. Denote with $L^*$ the maximum load of a minimum split routing solution. In a landmark paper, Schrijver, Seymour and Winkler~\cite{MR1612841} showed that $L \leq L^* + 1.5D$, where $D$ is the maximum demand value. They also found (implicitly) an instance of the \emph{Ring Loading Problem} with $L = L^* + 1.01D$. Recently, Skutella~\cite{MR3463048} improved these bounds by showing that $L \leq L^* + \frac{19}{14}D$, and there exists an instance with $L = L^* + 1.1D$. We contribute to this line of research by showing that $L \leq L^* + 1.3D$. We also take a first step towards lower and upper bounds for small instances.
\end{abstract}

\section{Introduction}

Given an undirected cycle on $n$ nodes together with non-negative demands between any pair of nodes, the \emph{Ring Loading Problem} asks for an unsplittable routing of the demands such that the maximum cumulated demand on any edge is minimal. Formally, we are given a graph $G = \left(V, E\right)$ with nodes $V = \left[n\right] := \left\{1, \ldots, n\right\}$, edges $\left\{i, i+1\right\}$ for each $i \in V$, where we assume throughout the paper that $\left\{n,n+1\right\} := \left\{n,1\right\}$, and demands for each pair of nodes $i < j$ of value $d_{i,j} \geq 0$. By a slight abuse of notation, we refer to both the demand from $i$ to $j$ and its value as $d_{i,j}$. An unsplittable solution decides for each demand whether it should be routed clockwise, sending all of its value along the path $\left\{i,i+1,\ldots,j\right\}$, or counter-clockwise, sending all of its value along the path $\left\{i,i-1,\ldots,1,n,\ldots,j\right\}$. The \emph{load} of an edge, for a given solution, is the sum of all demand values that are routed on paths that use the edge. We call the maximum load on any edge of the ring the \emph{load} of the solution. The problem is to find an unsplittable routing that minimizes the load. We denote with $L$ the load of such an optimal unsplittable solution. See~\cref{fig:intro} for an example.

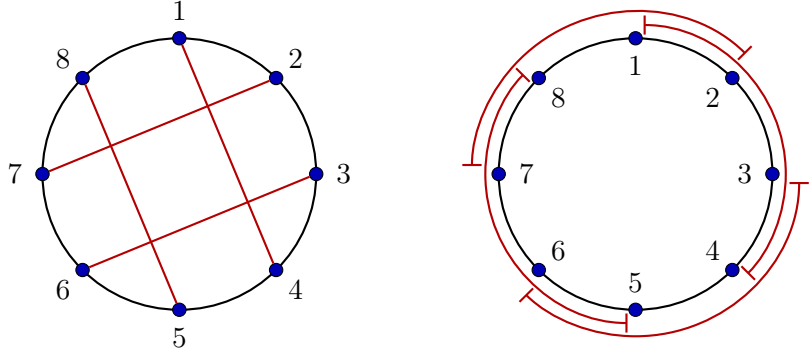
\begin{figure}
\centering
\begin{tikzpicture}[label_style/.style={circle, inner sep = 0pt, label distance = 3pt}]
\def\n{8}
\def\r{1.8}

\draw[thick] (0,0) circle(\r cm);

\foreach \i in {1,2,...,\n} {
  \node[mynode, label={[label_style]{90 + (\i -1) * -360 / \n}:\i}] (\i) at ({90 + (\i -1) * -360 / \n}:\r cm) {};
}

\foreach \i/\j in {1/4,2/7,3/6,5/8} {
  \draw[demand_col, thick] (\i) to[] (\j);
}

\begin{scope}[xshift=6cm]
\draw[thick] (0,0) circle(\r cm);

\foreach \i in {1,2,...,\n} {
  \node[mynode, label={[label_style]{-90 + (\i -1) * -360 / \n}:\i}] (\i) at ({90 + (\i -1) * -360 / \n}:\r cm) {};
}

\foreach \i/\j/\offset in {1/4/5,5/8/5,3/6/10} {
  \draw[demand_col, thick, shorten >= 3pt, shorten <= 3pt, |-|] ({90 + (\i -1) * -360 / \n}:\r cm + \offset pt) arc ({90 + (\i -1) * -360 / \n}:{90 + (\j -1) * -360 / \n}:\r cm + \offset pt);
}
\foreach \i/\j/\offset in {2/7/10} {
  \draw[demand_col, thick, shorten >= 3pt, shorten <= 3pt, |-|] ({90 + (\i -1) * -360 / \n}:\r cm + \offset pt) arc ({90 + (\i -1) * -360 / \n}:{360 + 180 - 90 + (\j -1) * -360 / \n}:\r cm + \offset pt);
}
\end{scope}
\end{tikzpicture}
\caption{An instance of the \emph{Ring Loading Problem} on $8$ nodes and $4$ non-zero demands with $d_{1,4} = d_{2,7} = d_{3,6} = d_{5,8} = 1$ (left) together with an optimum unsplittable routing of load $2$ (right).}
\label{fig:intro}
\end{figure}

The problem was introduced by Cosares and Saniee~\cite{Cosares1994} to mathematically model survivable networks with respect to the emerging standard of synchronous optical networks (SONET). The underlying structure to this technology, the SONET ring, is a set of network nodes and links that are arranged in a cycle. In this way, even in the event of a link failure, most of the traffic could be recovered. See~\cite{20262,Goralski:2002:SON:515627,book_sonet} for further resources on technical details. To the best of our knowledge, Cosares and Saniee~\cite{Cosares1994} also established the name \emph{Ring Loading Problem}. They further showed via a reduction of the \emph{Partition Problem} that the problem is NP-hard and provided an algorithm that returns an unsplittable solution with load at most $2L$. Using a result from Schrijver et al.~\cite{MR1612841}, Khanna~\cite{6768080} showed that there exists a PTAS, i.e. a class of poly-time algorithms that return a solution with load at most $\left(1 + \varepsilon\right)L$, for each fixed $\varepsilon > 0$. If all non-zero demands have the same value, Frank~\cite{FRANK1985164} showed that the \emph{Ring Loading Problem} can be solved in polynomial time.

Although a PTAS for the \emph{Ring Loading Problem} exists, there remain unsolved problems that connect unsplittable solutions to a relaxed version of the \emph{Ring Loading Problem}. To this end, consider the \emph{Ring Loading Problem} where demands are allowed to be routed splittably, i.e.\ a demand can be routed partly clockwise while the remaining part is routed counter-clockwise. The definition of the load of an edge and the load of a solution generalize naturally to the relaxed version. We denote with $L^*$ the optimum load of a split solution. The relaxed version of the \emph{Ring Loading Problem} has a linear programming formulation~\cite{Cosares1994} and can thus be solved in polynomial time. Further effort was put into finding more efficient algorithms (see~\cite{doi:10.1287/ijoc.8.3.235,doi:10.1287/opre.45.1.148,MR1612841,DELLAMICO1999119,MYUNG2004167,WANG200545}). It was also shown in~\cite{doi:10.1287/opre.45.1.148,MR1612841,DELLAMICO1999119} that $L \leq 2L^*$, and this bound is tight (\cite{doi:10.1287/opre.45.1.148,MR1612841}).

In a landmark paper, Schrijver, Seymour and Winkler~\cite{MR1612841} proved in this context that $L \leq L^* + \frac{3}{2}D$, where we denote with $D := \max_{i < j} d_{i,j}$ the maximum demand value. They furthermore gave the \enquote{guarantee} that $L \leq L^* + D$, which was later restated as conjecture in the survey on multicommodity flows by Shepherd~\cite{MR2513327}. More recently, Skutella~\cite{MR3463048} improved the upper bound by showing that $L \leq L^* + \frac{19}{14}D$. He also found an instance of the \emph{Ring Loading Problem} with $L = L^* + \frac{11}{10}D$, disproving the long-standing conjecture by Schrijver et al. and Shepherd. Skutella furthermore conjectured that $L \leq L^* + \frac{11}{10}D$. 

Interestingly, Schrijver et al.~\cite{MR1612841} gave an instance of the \emph{Ring Loading Problem} together with a split routing that cannot be turned into an unsplittable routing without increasing the load on some edge by at least $\frac{101}{100}D$, whereas Skutella~\cite{MR3463048} writes that this \enquote{does not imply a gap strictly larger than $D$ between the optimum values of split and unsplittable routings}. We show in~\cref{lem:boost} that this implication \emph{does} hold, and that Schrijver, Seymour and Winkler therefore (implicitly) found a counterexample to their own conjecture.

\paragraph{Our contributions}

The following theorem is the main contribution of this work.

\begin{theorem}
\label{thm:main}
Any split routing solution to the \emph{Ring Loading Problem} can be turned into an unsplittable routing while increasing the load on any edge by at most $\frac{13}{10}D$. In particular, we have $L \leq L^* + \frac{13}{10}D$.
\end{theorem}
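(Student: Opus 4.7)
The plan is to follow the general framework established by Schrijver, Seymour and Winkler and refined by Skutella: (i) reduce to a canonical split routing that is structured, (ii) model the unsplittability conversion as a rounding problem of choosing directions for each split demand, and (iii) bound the resulting load increase by a careful combinatorial analysis. The target bound $\frac{13}{10}D$ must hold edge-by-edge, not just for the maximum, which is why the theorem is stated in the strong form.

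First I would argue that one may restrict attention to a \emph{minimal} split routing: one that cannot be altered (by shifting flow of a single demand from one side to the other) to reduce the number of split demands without strictly increasing the load on some edge. Such a routing has a highly restricted structure known from the previous works. In particular, the split demands can be listed as $d_{i_1,j_1},\ldots,d_{i_k,j_k}$ in such a way that their supporting arcs form a laminar or chain-like family, with consecutive split demands sharing a common node. This structural reduction is what allows the rounding to be analyzed locally along a one-dimensional sequence rather than globally.

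Next, I would set up the rounding. Each split demand $d$ in the minimal routing carries a clockwise fraction $f\in(0,d)$; rounding it clockwise costs load increase $d-f$ on the clockwise arc and frees $f$ on the counter-clockwise arc, and vice versa. Since $d \le D$, the two potential increases on any edge sum to at most $D$ per split demand, but many split demands can stack on the same edge. The algorithm is to scan the chain of split demands and assign directions according to a rule that maintains a carefully chosen invariant, for instance a running ``excess'' on the two arcs of the current edge bounded by some multiple of $D$. I would design the rule so that whenever the excess on one side threatens to exceed $\frac{13}{10}D$, the next split demand is forced to the opposite side, compensating the imbalance.

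The main obstacle, and the reason the bound $\frac{13}{10}$ (rather than $\frac{3}{2}$ or $\frac{19}{14}$) is delicate, is the tight case analysis needed to calibrate the threshold in the invariant. Short chains of one or two splits can be handled by a direct argument that yields a much better constant than $\frac{13}{10}$. The difficult range is chains of moderate length where a greedy rule with one threshold gives a worse bound in one regime while an opposite greedy rule fails in another, and the optimal policy has to balance both. I expect the fraction $\frac{13}{10}$ to emerge as the solution of a small optimization that trades off the worst clockwise excess against the worst counter-clockwise excess, with the policy choosing between two rules depending on whether the current $f$ is above or below a critical fraction of $D$. Verifying that \emph{every} edge satisfies the $\frac{13}{10}D$ bound simultaneously, and not only a single worst edge, will likely require a potential-function or amortized argument attached to the chain traversal.
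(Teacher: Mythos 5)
Your proposal diverges from the paper's argument in several ways that together constitute a genuine gap; a single threshold-based greedy scan will not reach $\frac{13}{10}D$.

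First, the structural reduction you describe is not quite right. After removing parallel demands (\cref{obs:parallel}), one is left with $m$ demands that are \emph{pairwise crossing}, i.e.\ a ring on $2m$ nodes with demands $d_i = d_{i,i+m}$. Pairwise crossing is essentially the opposite of a laminar or chain-like family; there is no nested/chain structure to scan along. The object one scans is the \emph{pattern} $p_z(k) = x + \sum_{i \leq k} z_i$, a cumulative walk with steps $z_i \in \{v_i, -u_i\}$, and the load change on edge $\{k,k+1\}$ is $2p_z(k) - x - y$ (\cref{obs:obj}). Your running-excess invariant corresponds to keeping a forward greedy pattern inside a strip of width $D$; starting at $D/2$, this is exactly the Schrijver--Seymour--Winkler argument and yields $\frac{3}{2}D$, not better. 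Calibrating a single threshold cannot improve this, because the bottleneck is not where the walk goes during the scan but the mismatch between the start point $x$ and end point $y$: the additive performance is $\max\{2b - x - y, x + y - 2a\}$, and a one-pass greedy has no control over $y$.

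The actual improvement comes from a different mechanism, already introduced by Skutella and extended here: the \emph{crossover} lemma (\cref{lem:crossover}). One constructs several greedy patterns — some forward from chosen start points, some backward from chosen end points — and shows that if they live in $[0,D]$ and their start/end orderings are not cyclically consistent, then some pair must come $\frac{\delta}{2}D$-close at a common index (\cref{lem:existence}), where $\delta$ measures how far the demand closest to $D/2$ is from $D/2$. Close patterns can be spliced to produce one with controlled $x + y$. The new contribution of the paper is the construction of two \emph{induced} patterns from a base pattern $p_{z_a}$ with start/end points chosen at the trisection points of the interval $[x_a,\bar{y}_a]$ (\cref{def:induced,lem:induced_bound}), forcing the needed non-cyclic ordering and yielding additive performance at most $\left(\frac{7}{6} + \frac{\delta}{3}\right)D$ (\cref{lem:upper}). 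Combined with Skutella's complementary bound $\left(\frac{3}{2} - \frac{\delta}{2}\right)D$ for medium demands (\cref{lem:small}), the two curves meet at $\delta = \frac{2}{5}$, giving $\frac{13}{10}D$. So there \emph{is} a small optimization, but it is over $\delta$ across two separate lemmas, not over a threshold inside a single scan; and the edge-by-edge guarantee comes for free from the pattern formulation rather than requiring an amortized potential argument.
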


In order to prove the theorem, we first define a general framework that unifies structural results of split routings introduced by Skutella~\cite{MR3463048}. We then apply this framework in a new way to obtain better upper bounds. This result is the first progress towards closing the remaining additive gap since Skutella~\cite{MR3463048}.

As all previous lower bound examples are of relative small size, it is interesting to settle these cases conclusively. We take a step into this direction by showing upper and lower bounds for small instances. The upper bounds are deduced from a mixed integer linear program that verifies for a given instance size that no worse examples can exist. Although the lower bounds also follow from this formulation, we provide further examples to enrich the view on instances where the difference $L - L^*$ is large with respect to $D$. In fact, we give an infinite family of instances with $L > L^* + D$.

A summary of previous results on lower and upper bounds together with new advancements is shown in~\cref{fig:bounds} on the right vertical line, while on the left results are given with respect to $\delta \in \left[0,\frac{1}{2}\right]$ that parametrizes instances of the \emph{Ring Loading Problem} and indicates whether a demand of medium size exists.

Just as Schrijver et al.~\cite{MR1612841} and Skutella~\cite{MR3463048} before, we mention a nice combinatorial implication of our result. Schrijver et al.~\cite{MR1612841} define $\beta$ to be the infimum of all reals $\alpha$ such that the following combinatorial statement holds: For all positive integers $m$ and nonnegative reals $u_1,\ldots,u_m$ and $v_1,\ldots,v_m$ with $u_i + v_i \leq 1$, there exist $z_1,\ldots,z_m$ such that for every k, $z_k \in \left\{v_k, -u_k\right\}$ and
\begin{equation*}
\left| \sum_{i = 1}^k z_i - \sum_{i = k +1}^m z_i \right| \leq \alpha.
\end{equation*}

Schrijver et al.~\cite{MR1612841} prove that $\beta \in \left[\frac{101}{100},\frac{3}{2}\right]$. Skutella~\cite{MR3463048} reduces the size of the interval to $\beta \in \left[\frac{11}{10}, \frac{19}{14}\right]$. As a result of our work, we obtain $\beta \in \left[\frac{11}{10}, \frac{13}{10}\right]$.

\paragraph{Further Related Work}

In the \emph{Ring Loading Problem} with integer demand splitting, each demand is allowed to be split into two integer parts which are routed in different directions along the ring. The objective is to find an integer split routing that minimizes the load. Let $L'$ be the load of an optimal integer split routing solution. Lee et al.~\cite{leearticle} showed an algorithm that returns an integer split routing solution with load at most $L' +1$. Schrijver et al.~\cite{MR1612841} found an optimal solution in pseudo-polynomial time. Vachani et al.~\cite{doi:10.1287/ijoc.8.3.235} provided an $O\left(n^3\right)$ algorithm. In~\cite{doi:10.1137/S0895480199358709} Myung presented an algorithm with runtime $O\left(nk\right)$ where $k$ is the number of non-zero demands. Wang~\cite{WANG200545} proved the existence of an $O\left(k + t_S\right)$ algorithm where $t_S$ is the time for sorting $k$ nodes.

More recently, the weighted \emph{Ring Loading Problem} was introduced where each edge has a weight associated with it, and the \emph{weighted load} of an edge is the product of its weight and the smallest integer greater or equal than its load. In the case where demand splitting is allowed, Nong et al.~\cite{Nong2009} gave an $O\left(n^2 k\right)$ algorithm. If integer demand splitting is allowed, the authors present a pseudo-polynomial time algorithm. Later, Nong et al.~\cite{NONG20102978} present an $O\left(n^3k\right)$ algorithm. If the demands have to be send unsplittably, Nong et al.~\cite{Nong2009} prove the existence of a PTAS.

In a broader context, the \emph{Ring Loading Problem} is a special case of unsplittable multicommodity flows. We mention the case of single source unsplittable flows, as similarities between theorems and conjectures for these problems exist (see~\cite{MR1722208,MR1888989,misc,10.1007/978-3-540-75520-3_36}). We also refer to the survey of Shepherd~\cite{MR2513327}.

\paragraph{Outline}

In~\cref{sec:prel}, we introduce some notation and provide useful results from Schrijver et al.~\cite{MR1612841} and Skutella~\cite{MR3463048} that we need. We then continue in~\cref{sec:upper} with the proof of~\cref{thm:main}. In~\cref{sec:bound}, we turn our attention to upper and lower bounds for small instances. We wrap everything up with our conclusions in~\cref{sec:concl}.

\section{Preliminaries}
\label{sec:prel}

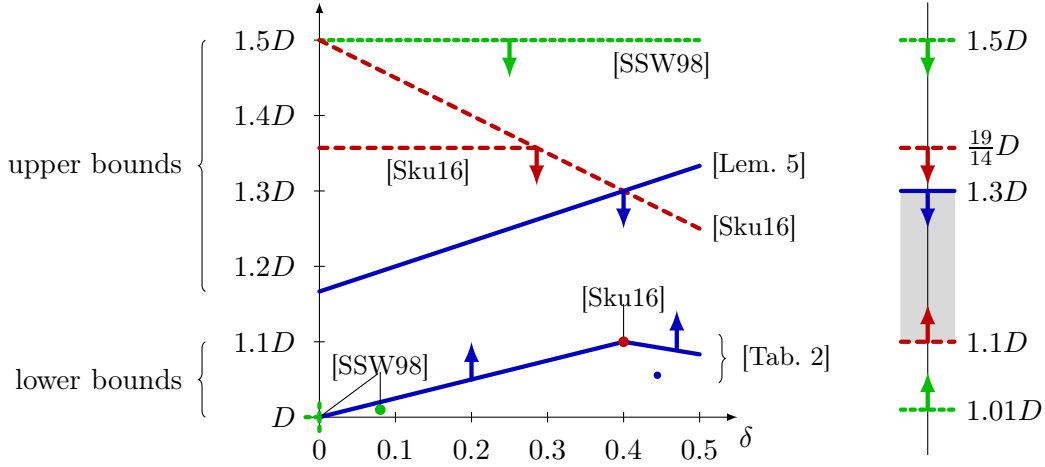
\begin{figure}
\centering
\usetikzlibrary{shapes.misc}
\begin{tikzpicture}[scale=1,
line/.style={draw, ultra thick, line cap = round},
ssw/.style={ssw_col, text = black, line, dotted},
martin/.style={martin_col, text = black, line, dashed},
new/.style={new_col, text = black,line, solid}
]

\definecolor{ssw_col}{RGB} {0,190,0}
\definecolor{martin_col}{RGB} {190,0,0}
\definecolor{new_col}{RGB} {0,0,190}

\def\radius{2pt}
\def\b{8cm}
\def\off{10pt}
\def\arr{.5cm}

\path[draw = white!85!black, fill = white!85!black, opacity = 1] (\b - \off, 11) to[] (\b -\off, 13) to[] (\b +\off, 13) to[] (\b +\off, 11) -- cycle;

\draw[-{latex}] (-.15, 10) to[] node[pos=1.02,below] {$\delta$} (5.5, 10);
\draw[-{latex}] (0, 9.85) to[] (0, 15.5);
\draw[] (\b, 9.5) to[] (\b, 15.5);

\foreach \x/\l in {0/0,1/0.1,2/0.2,3/0.3,4/0.4,5/0.5} {
\draw[] (\x cm, 10cm -2pt) to[] node[below=5pt] {$\l$} (\x cm, 10cm +2pt);
}

\foreach \y/\l in {0/D,1/1.1D,2/1.2D,3/1.3D,4/1.4D,5/1.5D} {
\draw[] (-2pt, 10cm + \y cm) to[] node[left=5pt] {$\l$} (+2pt, 10cm + \y cm);
}

\draw[decorate, decoration={brace}]  (-1.5cm,70/6) -- node[left=1ex] {upper bounds}  (-1.5cm,15);
\draw[decorate, decoration={brace}]  (-1.5cm,10) -- node[left=1ex] {lower bounds}  (-1.5cm,11);

\draw[decorate, decoration={brace}]  (5.25,11.1) -- node[black,right=1ex] {\small[Tab.~\ref{tab:it_res}]}  (5.25,190/18 - .1);

\draw[new] (0, 10/1) to[] (20/5,110/10);
\draw[new] (20/5, 110/10) to[] (10/2,130/12);
\draw[new, solid, -{latex}] (10/5, 105/10) to[] (10/5, 105cm/10 + \arr);
\draw[new, solid, -{latex}] (47/10, 109.166/10) to[] (47/10, 109.166cm/10 + \arr);

\draw[] (80/100, 1010/100) to[] node[black,above] {\small\cite{MR1612841}} +(0,.5cm);
\draw[shorten <= 1pt] (80/100, 1010/100) ++(0,.5cm) to[] (0,10);
\path[fill = ssw_col] (80/100, 1010/100) circle (\radius);

\draw[ssw] (0,15) to[] node[black,pos=.9, below] {\small\cite{MR1612841}} (5,15);
\draw[ssw, solid, -{latex}] (2.5, 15cm) to[] (2.5, 15 cm - \arr);
\draw[ssw] (0,10) to[] (-.25,10);
\draw[ssw] (0,10) to[] (0,10.25);
\draw[ssw] (0,10) to[] (0,9.8);

\draw[ssw] (\b -\off, 1010/100) to[] node[right = \off] {$1.01D$} (\b + \off, 1010/100);
\draw[ssw, solid, -{latex}] (\b, 101/10) to[] (\b, 101cm/10 + \arr);

\draw[ssw] (\b -\off,15) to[] node[right = \off] {$1.5D$} (\b +\off,15);
\draw[ssw, solid, -{latex}] (\b, 15cm) to[] (\b, 15 cm - \arr);

\draw[martin] (0, 15) to[] node[black,pos=1, right] {\small\cite{MR3463048}} (5,12.5);
\draw[martin] (0, 190/14) to[] node[black,pos=.5, below] {\small\cite{MR3463048}} (20/7,190/14);
\draw[martin, solid, -{latex}] (20/7, 190/14) to[] (20/7, 190cm/14 - \arr);

\draw[] (4, 11) to[] node[black,above] {\small\cite{MR3463048}} +(0cm,.5cm);
\path[fill = martin_col] (4, 11) circle (\radius);

\draw[martin] (\b -\off,190/14) to[] node[right = \off] {$\frac{19}{14}D$} (\b +\off,190/14);
\draw[martin, solid, -{latex}] (\b, 190/14) to[] (\b, 190cm/14 - \arr);
\draw[martin] (\b -\off,11) to[] node[right = \off] {$1.1D$} (\b +\off,11);
\draw[martin, solid, -{latex}] (\b, 11cm) to[] (\b, 11 cm + \arr);

\draw[new] (0, 70/6) to[] node[black,pos=1, right] {\small[Lem.~\ref{lem:upper}]} (5,40/3);

\coordinate[] (ce18) at (8/18 * 10, 19/18 * 10);

\path[fill = new_col] (ce18) circle(\radius - .6pt);

\draw[new] (\b -\off,13) to[] node[right = \off] {$1.3D$} (\b +\off,13);
\draw[new, -{latex}] (\b, 13) to[] (\b, 13cm - \arr);
\draw[new, -{latex}] (20/5, 13) to[] (20/5, 13cm - \arr);

\end{tikzpicture}
\caption{Summary of known results dependent on $\delta$ (left) and independent of $\delta$ (right). The currently best bounds are due to~\cref{thm:main} together with the lower bound in~\cite{MR3463048}.}
\label{fig:bounds}
\end{figure}

In this section, we introduce further notation and mention results already presented in~\cite{MR1612841,MR3463048}. We start with a preprocessing step to reduce the size and complexity of an instance to the \emph{Ring Loading Problem}.

Two demands $d_{i,j}$ and $d_{k,l}$ are \emph{parallel} if there exists a path from $i$ to $j$ and a path from $k$ to $l$ that are edge-disjoint, otherwise they are \emph{crossing}. Note that the demands $d_{i,j}$ and $d_{i,k}$ are parallel. 

As~\cref{thm:main} only argues about the load increase on all edges for split routing solutions, we can ignore and delete all demands that are routed unsplittably. The following observation shows that we can assume that there are not too many remaining demands.

\begin{observation}[\cite{MR1612841}]
\label{obs:parallel}
Given a split routing of two parallel demands $d_1$ and $d_2$. The routing can be altered such that at most one demand is routed splittably, without increasing the load on any edge.
\end{observation}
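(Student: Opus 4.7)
First I would set up notation: for $\ell \in \{1,2\}$ let $a_\ell, b_\ell \geq 0$ with $a_\ell + b_\ell = d_\ell$ denote the clockwise and counter-clockwise portions of $d_\ell$ in the given routing, and let $C_\ell$ and $\bar{C}_\ell$ denote the edge sets of the clockwise and counter-clockwise arcs between the endpoints of $d_\ell$. The plan is to exhibit a local ``swap'' that simultaneously shifts $\epsilon$ units of flow between the two routing directions of each demand in such a way that at least one of $a_1, b_1, a_2, b_2$ becomes zero while no edge's load increases.

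The key structural step is to reformulate parallelism combinatorially. By definition, $d_1$ and $d_2$ are parallel iff there exist edge-disjoint paths between their respective endpoints. Since the only paths in the ring are the two arcs, this is equivalent to at least one of the four intersections
\[
C_1 \cap C_2, \quad C_1 \cap \bar{C}_2, \quad \bar{C}_1 \cap C_2, \quad \bar{C}_1 \cap \bar{C}_2
\]
being empty. Verifying this equivalence amounts to a short case analysis on the cyclic order of the four endpoints of $d_1$ and $d_2$ on the ring.

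I would then focus on the representative case $C_1 \cap C_2 = \emptyset$; the other three cases are handled by analogous swaps. Define the modification $(a_1, b_1, a_2, b_2) \mapsto (a_1 + \epsilon, b_1 - \epsilon, a_2 + \epsilon, b_2 - \epsilon)$ with $\epsilon := \min(b_1, b_2)$, which is strictly positive since both demands are assumed split. A direct edge-by-edge computation shows the net load change on any edge $e$ is $+2\epsilon$ if $e \in C_1 \cap C_2$, $-2\epsilon$ if $e \in \bar{C}_1 \cap \bar{C}_2$, and $0$ otherwise. Since the first set is empty by assumption, no edge gains load, and after the swap at least one of $b_1, b_2$ equals zero, so the corresponding demand is routed unsplittably.

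The main obstacle I expect is not difficulty but careful case-handling: each of the four emptiness patterns needs its own tailored swap (always pushing each demand towards the direction pair whose intersection is empty), and one has to verify that $\epsilon$ stays in the feasibility range of the source flows in every case. Once the correct swap is identified, the load arithmetic mirrors the representative case above and the proof concludes.
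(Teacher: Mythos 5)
Your proof is correct and takes essentially the same approach as the paper: identify the pair of disjoint arcs, shift flow symmetrically toward them by the minimum of the two ``wrong-direction'' amounts, and observe that load on the disjoint arcs is exactly offset while it strictly decreases on their common complement. The paper streamlines this by working directly with the edge-disjoint paths $P_1, P_2$ and their complements $Q_1 = E \setminus P_1$, $Q_2 = E \setminus P_2$, so a single argument covers all four of your intersection cases uniformly.
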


\begin{proof}
By the definition of parallel demands, we know that there are paths $P_i$, $i \in \left[2\right]$, connecting the nodes of demand $d_i$, such that $P_1 \cap P_2 = \emptyset$. Let $Q_i = E \setminus P_i$, $i \in \left[2\right]$. Note that $P_1$ is completely contained in $Q_2$ and vice versa. We denote with $x_i$ the amount of flow that demand $d_i$ is routing along $Q_i$, $i \in \left[2\right]$. If we now decrease the demand value routed along $Q_1$ and $Q_2$ by $\min \left\{ x_1,x_2\right\}$ and increase the demand value routed along $P_1$ and $P_2$ by the same amount, the load on the edges of $P_1$ and $P_2$ remain unchanged while the load on $Q_1 \cap Q_2$ decreases. Afterwards either $d_1$ or $d_2$ is routed unsplittably.
\end{proof}

If we apply~\cref{obs:parallel} and delete afterwards all demands that are routed unsplittably, we can concentrate on instances with pairwise crossing demands, implying in particular that every node is end point of at most one demand. If a node is not the end point of a demand, the load on its adjacent edges have the same value, allowing us to delete the node and merge the edges.

After this process we are left with a ring on $n = 2m$ nodes, demands $d_i := d_{i,i+m} > 0$ for $i \in \left[m\right]$ and a split routing. We denote for all $i \in \left[m\right]$ with $u_i > 0$ the amount of flow from demand $d_i$ routed clockwise and likewise with $v_i > 0$ the remainder of flow routed counter-clockwise. Note that $u_i + v_i = d_i$, $i \in \left[m\right]$. From now on we refer to an instance with this structure as \emph{split routing solution}. An example is given in~\cref{fig:prel} on the left.

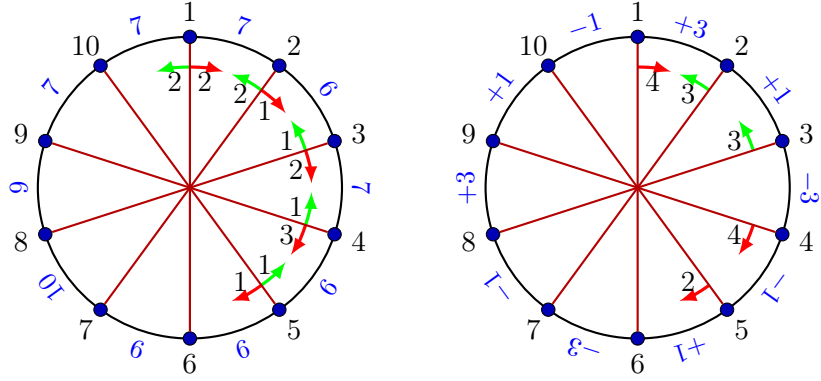
\begin{figure}
\centering
\def\radius{2cm}\def\delangle{16}
\begin{tikzpicture}[
]


\draw[thick] (0,0) circle (\radius);

\foreach \i [evaluate=\i as \j using int(\i +1)] in {0,1,...,9} {
	\node[mynode] (\i) at (90 - 360 / 10 * \i:\radius) {};
  \node[] () at (90 - 360 / 10 * \i:7/6*\radius) {\j};
}

\foreach \i/\l in {0/7,1/6,2/7,3/9,4/9,5/9,6/10,7/9,8/7,9/7} {
	\node[text = load_col, rotate = -360 / 20 - 360 / 10 * \i] at (90 - 360 / 20 - 360 / 10 * \i:9/8 * \radius) {\small$\l$};
}

\foreach \i/\u/\v in {0/2/2,1/1/2,2/2/1,3/3/1,4/1/1} {
	\draw[split_arc, split_col_u] (90 - 360 / 10 * \i:\radius - 1/5*\radius) arc[radius = \radius - 1/5*\radius, start angle = 90 - 360 / 10 * \i, delta angle = -\delangle];
	\draw[split_arc, split_col_v] (90 - 360 / 10 * \i:\radius - 1/5*\radius) arc[radius = \radius - 1/5*\radius, start angle = 90 - 360 / 10 * \i, delta angle = \delangle];

	\node[] at ({90 - 360 / 10 * \i - \delangle /2}:{\radius - 7/24*\radius}) {\u};
	\node[] at ({90 - 360 / 10 * \i + \delangle /2}:{\radius - 7/24*\radius}) {\v};
}

\foreach \i/\j/\d in {0/5/4,1/6/3,2/7/3,3/8/4,4/9/2} {
	\path[mydemand] (\i) to[] 
  (\j);
}



\end{tikzpicture}%
\qquad
\begin{tikzpicture}[
]


\draw[thick] (0,0) circle (\radius);

\foreach \i [evaluate=\i as \j using int(\i +1)] in {0,1,...,9} {
	\node[mynode] (\i) at (90 - 360 / 10 * \i:\radius) {};
  \node[] () at (90 - 360 / 10 * \i:7/6*\radius) {\j};
}

\foreach \i/\l in {0/+3,1/+1,2/-3,3/-1,4/+1,5/-3,6/-1,7/+3,8/+1,9/-1} {
	\node[text = load_col, rotate = -360 / 20 - 360 / 10 * \i] at (90 - 360 / 20 - 360 / 10 * \i:9/8 * \radius) {\small$\l$};
}

\foreach \i/\u/\v in {0/4/0,3/4/0,4/2/0} {
	\draw[split_arc, split_col_u] (90 - 360 / 10 * \i:\radius - 1/5*\radius) arc[radius = \radius - 1/5*\radius, start angle = 90 - 360 / 10 * \i, delta angle = -\delangle];

	\node[] at ({90 - 360 / 10 * \i - \delangle /2}:{\radius - 7/24*\radius}) {\u};
}

\foreach \i/\u/\v in {1/0/3,2/0/3} {
	\draw[split_arc, split_col_v] (90 - 360 / 10 * \i:\radius - 1/5*\radius) arc[radius = \radius - 1/5*\radius, start angle = 90 - 360 / 10 * \i, delta angle = \delangle];

	\node[] at ({90 - 360 / 10 * \i + \delangle /2}:{\radius - 7/24*\radius}) {\v};
}

\foreach \i/\j/\d in {0/5/4,1/6/3,2/7/3,3/8/4,4/9/2} {
	\path[mydemand] (\i) to[] 
  (\j);
}



\end{tikzpicture}%
\caption{An example of a split routing solution on $m = 5$ pairwise crossing demands with $u = \left(2,1,2,3,1\right)$ and $v = \left(2,2,1,1,1\right)$ together with the load on each edge (left). The corresponding unsplittable solution for $z = \left(v_1, -u_2, -u_3, v_4, v_5\right) = \left(2,-1,-2,1,1\right)$ together with load changes on every edge (right). The additive performance of $z$ is $3$.}
\label{fig:prel}
\end{figure}

The following definition describes for a given $\delta \in \left[0,\frac{1}{2}\right]$ all split routing solutions without demands of medium size (with respect to $\delta$) and ensures the existence of a demand on the boundary to medium demands. Formally we call these split routing solutions $\delta$-instances:

\begin{definition}
Let $\delta \in \left[0,\frac{1}{2}\right]$. We call a split routing solution a \emph{$\delta$-instance}, if for all $i \in \argmin_{j \in \left[m\right]}\left(\left|\frac{1}{2}D - d_j \right|\right)$ holds $d_i \in \left\{\delta D, \left(1 - \delta\right)D\right\}$.
\end{definition}

A $\frac{1}{2}$-instance for example has a demand of value $\frac{1}{2}D$, whereas a $0$-instance only has demands of value $D$. An important property of $\delta$-instances is that $d_i \in \left[0, \delta D\right] \cup \left[\left(1 - \delta\right)D, D\right]$ for all $i \in \left[m\right]$.

Any unsplittable solution has to decide for each demand $d_i$ whether $u_i$ units of flow are rerouted to use the counter-clockwise direction, or whether $v_i$ units of flow are rerouted to use the clockwise direction. We encode this decision using $z = \left(z_1,\ldots,z_m\right)$, with $z_i \in \left\{v_i, -u_i\right\}$ for all $i \in \left[m\right]$, where $z_i = v_i$ means that we send the demand completely in clockwise direction, whereas $z_i = -u_i$ means that we completely send the demand in counter-clockwise direction. In either case the $z_i$ values model exactly the increase of load on the clockwise edges from $i$ to $i+m$, and the decrease of load on the counter-clockwise edges. For $k \in \left[m\right]$ the load on an edge $\left\{k, k+1\right\}$ changes by
\begin{equation*}
\sum_{i = 1}^{k} z_i - \sum_{i = k+1}^m z_i,
\end{equation*}
while the load on the opposite edge $\left\{k + m, k + m +1\right\}$ changes by the negative amount. The maximum increase of load on any edge is therefore
\begin{equation*}
\max_{k \in \left[m\right]} \left| \sum_{i = 1}^k z_i - \sum_{i = k+1}^m z_i \right|.
\end{equation*}
As described by Skutella~\cite{MR3463048}, we refer to this quantity as the \emph{additive performance} of $z$. In~\cref{fig:prel} an example of the load change and the additive performance is given.

Let $x \in \RR$ be fixed, we define $p_z(k) := x + \sum_{i = 1}^k z_i$, for $k \in \left[m\right]$. We refer to $p_z$ as a \emph{pattern} starting at $x = p_z(0)$ and ending at $y = p_z(m)$. We denote with $a := \min_{k \in \left[m\right]} p_z(k)$ and $b := \max_{k \in \left[m\right]} p_z(k)$ the minimum and maximum of pattern $p_z$, respectively. We refer to $\left[a,b\right]$ as strip and say that the pattern $p_z$ lives on the strip $\left[a,b\right]$ of width $b - a$. As $p_z(k) - p_z(k -1) = z_i$, when we refer to a pattern $p_z$ we also refer to the corresponding unsplittable solution. As the choice of $x$ might vary, multiple patterns correspond to a single unsplittable solution. A pattern can be visualized as seen in~\cref{fig:prel_pattern}.

\begin{figure}
\centering
\begin{tikzpicture}[yscale=.5,xscale=1.25,
node style/.style={draw = black, fill = black, circle, inner sep = 0pt, minimum size = 3pt},
edge style/.style={draw},
line/.style={draw = black},
fwd/.style={draw,thick}]


\definecolor{a_col}{RGB}{0,190,0}
\definecolor{b_col}{RGB}{0,0,190}
\definecolor{c_col}{RGB}{190,0,0}

\path[fill=white!90!black] (0,1) rectangle (5,4);

\foreach \y in {2,...,3}
    \draw[help lines] (5,\y) -- (0,\y);

\draw[help lines] (0,1) -- (5,1) node[text = black, anchor = west] {$a = 1$};
\draw[help lines] (5,2) -- (0,2) node[text = black, anchor = east] {$x = 2$};
\draw[help lines] (0,3) -- (5,3) node[text = black, anchor = west] {$y = 3$};

\foreach \x in {0,...,5}
    \draw (\x,1pt) -- (\x,-3pt) node[anchor=north] {\x};

\path[line, {latex}-] (5.5,0) to[] node[anchor = north west,pos = 0] {$k$} (0,0) node[anchor=east] {$0$};
\path[line] (0,4) to[] (5,4) node[anchor=west] {$b = 4$};
\path[line, -{latex}] (0,0) to[] (0,4.5) node[anchor=east] {$p_z(k)$};

\node[node style] (0) at (0,2) {};
\node[node style] (1) at (1,4) {};
\node[node style] (2) at (2,3) {};
\node[node style] (3) at (3,1) {};
\node[node style] (4) at (4,2) {};
\node[node style] (5) at (5,3) {};

\path[edge style, fwd, a_col] (0) to[] (1);
\path[edge style, fwd, a_col] (1) to[] (2);
\path[edge style, fwd, a_col] (2) to[] (3);
\path[edge style, fwd, a_col] (3) to[] (4);
\path[edge style, fwd, a_col] (4) to[] (5);
\end{tikzpicture}
\caption{An example of a pattern $p_z$ that corresponds to the split routing given in~\cref{fig:prel} with $z = \left(2,-1,-2,1,1\right)$ and start point $x = 2$, end point $y = 3$, minimum value $a = 1$ and maximum value $b = 4$. The additive performance of the pattern due to~\cref{obs:obj} is $\max\left\{2b - x -y, x + y - 2a \right\} = 3$.}
\label{fig:prel_pattern}
\end{figure}
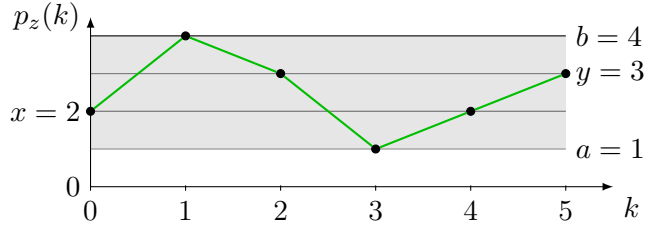

\begin{observation}[\cite{MR3463048}]
\label{obs:obj}
Given an unsplittable solution $z$ with corresponding pattern $p_z$ with start point $x$, end point $y$ living on a strip of $\left[a,b\right]$, then the additive performance of pattern $p_z$ is
\begin{equation}
\max_{k \in \left[m\right]} \left| \sum_{i = 1}^k z_i - \sum_{i = k +1}^m z_i \right| = \max\left\{ 2b - x - y, x + y - 2a\right\}.
\end{equation}
\end{observation}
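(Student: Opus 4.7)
The plan is to rewrite the quantity $\sum_{i=1}^k z_i - \sum_{i=k+1}^m z_i$ as a single affine function of $p_z(k)$, so that maximizing its absolute value over $k \in [m]$ reduces immediately to reading off the pattern's extremes $a$ and $b$. No problem-specific structure beyond the definitions of the pattern, its endpoints $x,y$, and its strip $[a,b]$ is needed; this is purely an algebraic identity.

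First I would use the telescoping definition $p_z(k) = x + \sum_{i=1}^k z_i$ to rewrite both partial sums explicitly in terms of the pattern. Directly, $\sum_{i=1}^k z_i = p_z(k) - x$. For the tail sum, the fact that $p_z(m) = y$ gives $\sum_{i=1}^m z_i = y - x$, so $\sum_{i=k+1}^m z_i = (y - x) - (p_z(k) - x) = y - p_z(k)$. Subtracting yields the clean identity
$$\sum_{i=1}^k z_i - \sum_{i=k+1}^m z_i = 2 p_z(k) - x - y.$$

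Next I would take absolute values and maximize over $k \in [m]$. Using the elementary fact that $\max_k |f(k)| = \max\bigl\{\max_k f(k),\, -\min_k f(k)\bigr\}$ together with the monotonicity of the affine map $t \mapsto 2t - x - y$, the inner maximum is attained at any index $k$ with $p_z(k) = b$ and the inner minimum at any $k$ with $p_z(k) = a$. Substituting gives the two candidates $2b - x - y$ and $x + y - 2a$, whose maximum is exactly the claimed right-hand side.

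The only bookkeeping point is making sure that the extremes $a$ and $b$ really control the range of $p_z(k)$ on the indices over which we maximize; this is immediate from the definitions $a = \min_{k \in [m]} p_z(k)$ and $b = \max_{k \in [m]} p_z(k)$. I therefore do not expect any real obstacle: once the substitution turning the telescoped difference into $2p_z(k) - x - y$ is in place, the observation is a one-line consequence.
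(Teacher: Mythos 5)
Your proof is correct and takes essentially the same route as the paper: both establish the identity $\sum_{i=1}^k z_i - \sum_{i=k+1}^m z_i = 2p_z(k) - x - y$ and then observe that the maximum of the absolute value is attained where $p_z(k)$ equals $a$ or $b$. You have merely spelled out the telescoping step and the monotonicity argument in a bit more detail.
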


\begin{proof}
By the definition of $p_z(k)$ we have
\begin{equation*}
\sum_{i = 1}^k z_i - \sum_{i = k +1}^m z_i = 2p_z(k) - x - y.
\end{equation*}
The claim then follows from the fact that the maximum in $\left| 2p_z(k) - x - y\right|$ is obtained at an index $k$ where $p_z(k)$ is either maximum or minimum.
\end{proof}

\begin{observation}[\cite{MR3463048}]
\label{obs:symm}
Let $\varepsilon > 0$. Given an unsplittable solution $z$ with corresponding pattern $p_z$ with start point $x$ and end point $y$ living on a strip of $\left[a,b\right]$, the additive performance of pattern $p_z$ is at most $b -a + \varepsilon$ if and only if the pattern starts at $x$ and ends at $y \in \left[y_{\text{opt}} - \varepsilon, y_{\text{opt}} + \varepsilon\right] \cap \left[a,b\right]$ with $y_{\text{opt}} := a + b - x$.
\end{observation}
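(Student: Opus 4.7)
The plan is to derive Observation~\ref{obs:symm} as a direct consequence of Observation~\ref{obs:obj}. By~\cref{obs:obj}, the additive performance of $p_z$ equals $\max\{2b - x - y,\ x + y - 2a\}$, so the condition ``additive performance $\leq b - a + \varepsilon$'' is equivalent to the conjunction of the two linear inequalities $2b - x - y \leq b - a + \varepsilon$ and $x + y - 2a \leq b - a + \varepsilon$.

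Next I would rearrange both inequalities in terms of $y$. The first becomes $y \geq a + b - x - \varepsilon$, i.e.\ $y \geq y_{\text{opt}} - \varepsilon$, and the second becomes $y \leq a + b - x + \varepsilon$, i.e.\ $y \leq y_{\text{opt}} + \varepsilon$. Together they give exactly $y \in [y_{\text{opt}} - \varepsilon, y_{\text{opt}} + \varepsilon]$, and this equivalence is an ``if and only if'' because each rearrangement is reversible.

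Finally, the intersection with $[a,b]$ is automatic rather than an extra condition: since $y = p_z(m)$ is one of the values of the pattern and the pattern by definition lives on $[a,b]$, we always have $y \in [a,b]$. Hence intersecting the interval $[y_{\text{opt}} - \varepsilon, y_{\text{opt}} + \varepsilon]$ with $[a,b]$ does not change the characterization, but it is included in the statement because one typically wants a feasible end point.

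There is no real obstacle here; the whole argument is a two-line calculation. The only mild subtlety is to notice that both directions of the equivalence are needed, and that the ``$\cap [a,b]$'' in the statement is a sanity condition enforced by the definition of the strip rather than a separate requirement.
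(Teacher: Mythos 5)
Your proof is correct and follows exactly the route the paper intends: the paper's one-line proof simply defers to \cref{obs:obj} and the definitions, and you have spelled out that deferral (substituting the expression for the additive performance, rearranging the two resulting inequalities to isolate $y$, and noting that $y \in [a,b]$ holds automatically since $y = p_z(m)$ lies on the strip). Nothing is missing.
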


\begin{proof}
The claim follows from the definitions together with~\cref{obs:obj}.
\end{proof}

Given a strip of width $D$, say $\left[0, D\right]$. Let $x \in \left[0,D\right]$, we denote throughout with $\bar{x} := D - x$ the reflection of $x$ across $\frac{1}{2}D$. We can construct a pattern with start point $x$ living on a strip $\left[a,b\right] \subseteq \left[0,D\right]$ by applying iteratively the following observation.

\begin{observation}
\label{obs:trivial}
Let $d = u + v$ with $u,v \geq 0$. If $I$ is an interval of size at least $d$ and $x \in I$, then $x + v \in I$ or $x - u \in I$ (or both).
\end{observation}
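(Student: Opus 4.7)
The plan is a short direct argument by contradiction. Write $I = [a,b]$ with $b - a \geq d = u + v$ and $x \in [a,b]$. I would assume towards contradiction that both $x + v \notin I$ and $x - u \notin I$ and derive a contradiction with the size of $I$.

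First I would handle $x + v$. Since $v \geq 0$, we have $x + v \geq x \geq a$, so $x + v \notin I$ forces $x + v > b$, i.e.\ $v > b - x$. Symmetrically, since $u \geq 0$, we have $x - u \leq x \leq b$, so $x - u \notin I$ forces $x - u < a$, i.e.\ $u > x - a$.

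Adding these two strict inequalities yields
\begin{equation*}
u + v > (b - x) + (x - a) = b - a,
\end{equation*}
which contradicts the hypothesis $b - a \geq u + v$. Hence at least one of $x + v$, $x - u$ lies in $I$.

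There is no real obstacle here: the statement is essentially a pigeonhole observation on the real line, and the only thing to be careful about is the nonnegativity of $u$ and $v$, which is exactly what guarantees that $x + v$ lies to the right of $x$ and $x - u$ lies to its left, so that "leaving $I$" can only happen through the upper, resp.\ lower, endpoint.
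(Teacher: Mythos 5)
Your proof is correct. The paper itself gives no proof of \cref{obs:trivial}---it is stated without argument and treated as self-evident---so your contradiction argument is a legitimate and complete way to fill in the omitted reasoning. Two small points worth mentioning: you implicitly take $I$ to be a closed interval $[a,b]$, which is what the paper uses throughout (e.g.\ $[0,D]$); for a bounded \emph{open} interval of width exactly $d$ the claim would actually fail (take $I=(0,1)$, $x=\tfrac12$, $u=v=\tfrac12$), so the closedness assumption is doing real work. Also note that if $I$ is unbounded on one side, your case analysis degenerates gracefully (the corresponding ``leaving through the endpoint'' case cannot occur), so the argument covers that situation as well.
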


Formally, we construct the pattern $p_z$ by setting $p_z(0) = x$ for some $x \in \left[0,D\right]$ and choose $z_k \in \left\{-u_i, v_i\right\}$ iteratively such that $p_z(k) = p_z(k -1) + z_i \in \left[0,D\right]$ for all $k = 1,\ldots,m$, which always works by~\cref{obs:trivial}. If this decision is not unique, we set $z_k$ such that $\left|\frac{1}{2}D - p_z(k)\right|$ is minimal, i.e.\ $p_z(k)$ is as close as possible to the middle $\frac{1}{2}D$ of the interval $\left[0,D\right]$. Remaining ties are broken arbitrarily. A pattern that is constructed with respect to this procedure is called a \emph{forward greedy pattern}. For technical reasons, we call a forward greedy pattern $p_{z}$ \emph{proper} if its start point is far enough away from the boundary, i.e. $x \in \left[\frac{\delta}{4}D, \left(1 - \frac{\delta}{4}\right)D\right]$. This requirement is used in~\cref{lem:existence}.

We obtain a \emph{backward greedy pattern} $p_z$ by applying this procedure backwards. We define $p_z(m) = y$, for some $y \in \left[0,D\right]$, and iteratively choose $p_z(k -1) = p_z(k) - z_k$, for all $k = m,\ldots,1$, such that $\left| p_z(k -1) - \frac{1}{2}D\right|$ is minimal. We call a backward greedy pattern \emph{proper} if its end point is far enough away from the boundary, i.e. $y \in \left[\frac{\delta}{4}D,\left(1 - \frac{\delta}{4}\right)D\right]$.

A pattern is called a \emph{(proper) greedy pattern} if it is either a (proper) forward greedy pattern or a (proper) backward greedy pattern.

Using a forward greedy pattern starting at $\frac{1}{2}D$ together with~\cref{obs:symm}, Schrijver et al.~\cite{MR1612841} showed that any split routing solution to the \emph{Ring Loading Problem} can be turned into an unsplittable solution while increasing the load on any edge by at most $\frac{3}{2}D$.

Although the following structural properties of (greedy) patterns are crucial for our results, we refer the reader for complete proofs to~\cite{MR3463048}.

\begin{definition}[\cite{MR3463048}]
Let $\varepsilon \geq 0$. Two patterns $p_z$ and $p_{z'}$ are said to be $\varepsilon$-close if $|p_{z}(k) - p_{z'}(k)| \leq \varepsilon$ for some $k \in \left\{0,1,\ldots,m\right\}$. 
\end{definition}

The following lemma combines two $\varepsilon$-close patterns to a single pattern while preserving crucial properties.

\begin{lemma}[\cite{MR3463048}]
\label{lem:crossover}
Consider a fixed split routing solution. Let $p_{z'}$ be a pattern with start point $x'$ living on strip $\left[a',b'\right]$, and $p_{z''}$ a pattern with end point $y''$ living on strip $\left[a'', b''\right]$. If the two patterns are $\varepsilon$-close for some $\varepsilon \geq 0$, then there is a pattern $p_z$ living on a sub-strip of
\begin{equation*}
\left[\min\left\{a', a''\right\} - \frac{1}{2}\varepsilon, \max\left\{b', b''\right\} + \frac{1}{2} \varepsilon\right]
\end{equation*}
with start point $x$ and end point $y$ such that $x + y = x' + y''$.
\end{lemma}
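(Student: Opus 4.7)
The plan is to construct the new pattern by concatenating the prefix of $p_{z'}$ with the suffix of $p_{z''}$ at the index where they are $\varepsilon$-close, and to absorb the vertical mismatch by shifting the whole hybrid pattern down by half the gap, so that the boundary values pick up equal and opposite corrections.

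More concretely, let $k \in \{0,1,\ldots,m\}$ be an index witnessing the $\varepsilon$-closeness, and set $\Delta := p_{z'}(k) - p_{z''}(k)$, so that $|\Delta| \leq \varepsilon$. I would then define $z$ by $z_i := z'_i$ for $i \leq k$ and $z_i := z''_i$ for $i > k$; since both $z'_i$ and $z''_i$ belong to $\{v_i, -u_i\}$, so does $z_i$, hence $z$ is a valid unsplittable assignment. Finally, I pick the start point to be $x := x' - \tfrac{1}{2}\Delta$, so that
\begin{equation*}
p_z(j) \;=\; x + \sum_{i=1}^j z_i \;=\; p_{z'}(j) - \tfrac{1}{2}\Delta \qquad \text{for } 0 \leq j \leq k,
\end{equation*}
while for $j \geq k$ a direct calculation using $\sum_{i=1}^k z'_i = p_{z'}(k) - x'$ and $\sum_{i=1}^k z''_i = p_{z''}(k) - x''$ (where $x'' := p_{z''}(0)$) yields
\begin{equation*}
p_z(j) \;=\; p_{z''}(j) + \tfrac{1}{2}\Delta.
\end{equation*}

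From these two identities, the strip bound is immediate: for $j \leq k$ the values lie in $[a' - \tfrac{1}{2}\Delta,\, b' - \tfrac{1}{2}\Delta] \subseteq [a' - \tfrac{1}{2}\varepsilon,\, b' + \tfrac{1}{2}\varepsilon]$, and for $j \geq k$ they lie in $[a'' + \tfrac{1}{2}\Delta,\, b'' + \tfrac{1}{2}\Delta] \subseteq [a'' - \tfrac{1}{2}\varepsilon,\, b'' + \tfrac{1}{2}\varepsilon]$. For the start and end points, $x = x' - \tfrac{1}{2}\Delta$ and $y := p_z(m) = y'' + \tfrac{1}{2}\Delta$, whence $x + y = x' + y''$, as required.

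There is no real obstacle here: the only subtlety is choosing the correct vertical shift. Shifting by exactly half of the gap $\Delta$ is what keeps the identity $x + y = x' + y''$ intact while distributing the worst-case excursion symmetrically, which in turn is what produces the tight additive term $\tfrac{1}{2}\varepsilon$ on both sides of the enlarged strip rather than a full $\varepsilon$ on one side.
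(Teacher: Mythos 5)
Your proof is correct, and it is essentially the crossover construction from Skutella~\cite{MR3463048}; the paper itself does not reprove this lemma but simply cites that reference. Splicing the prefix of $z'$ with the suffix of $z''$ at the witness index $k$, and centering the resulting pattern by shifting by $-\tfrac{1}{2}\Delta$ so that the discrepancy $\Delta = p_{z'}(k)-p_{z''}(k)$ is shared evenly between the two halves, is exactly the intended argument; the identities $p_z(j)=p_{z'}(j)-\tfrac{1}{2}\Delta$ for $j\le k$ and $p_z(j)=p_{z''}(j)+\tfrac{1}{2}\Delta$ for $j\ge k$ give the strip bound and the relation $x+y=x'+y''$ immediately.
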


This lemma describes situations where $\frac{\delta}{2}D$-close patterns exist.

\begin{lemma}[\cite{MR3463048}]
\label{lem:existence}
Consider three proper greedy patterns $p_{z_a}, p_{z_b}, p_{z_c}$, all three living on sub-strips of $\left[0,D\right]$. If the sorting of the patterns by their end points is not a cyclic permutation of the sorting of their start points, then (at least) two of the three patterns are $\frac{1}{2}\delta D$-close.
\end{lemma}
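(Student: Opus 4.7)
The plan is to argue by contradiction. Suppose that no two of the three patterns are $\frac{1}{2}\delta D$-close, so that $|p_{z_x}(k) - p_{z_y}(k)| > \frac{1}{2}\delta D$ for every $k \in \{0,1,\ldots,m\}$ and every pair $\{x,y\} \subseteq \{a,b,c\}$. Let $\sigma(k)$ denote the sorting of $\{p_{z_a}(k), p_{z_b}(k), p_{z_c}(k)\}$ from smallest to largest, viewed as a permutation of $\{a,b,c\}$; the goal is to show that $\sigma(m)$ is a cyclic permutation of $\sigma(0)$, which contradicts the hypothesis of the lemma.

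For a single step from $k$ to $k+1$ every pattern changes by either $+v_{k+1}$ or $-u_{k+1}$, partitioning $\{a,b,c\}$ into an ``up'' group and a ``down'' group whose internal order is preserved. This observation alone rules out the single-step full reversal of $\sigma$, leaving only the identity, the two $3$-cycles, and the two adjacent transpositions as candidate transitions, and the heart of the argument is to rule out the two transpositions. If $d_{k+1} \leq \delta D$ (small demand), the relative change $(p_{z_x}(k+1) - p_{z_y}(k+1)) - (p_{z_x}(k) - p_{z_y}(k))$ is $0$ or $\pm d_{k+1}$, hence of magnitude at most $\delta D$, so the gap cannot jump from $>\frac{1}{2}\delta D$ to $<-\frac{1}{2}\delta D$ and $\sigma$ is preserved. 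If $d_{k+1} \geq (1-\delta)D$ (large demand), the greedy ``free window'' $[u_{k+1}, D - v_{k+1}]$ has width $D - d_{k+1} \leq \delta D$, so the gap hypothesis forces at most one of the three patterns to lie in this window while the other two are pushed up or down by the boundary feasibility of $[0,D]$. A case analysis on which pattern is free and on the direction selected by its greedy rule then rules out both transpositions: in each sub-configuration the midpoint preference of the greedy rule is incompatible with maintaining all three gaps strictly above $\frac{1}{2}\delta D$ after the step. In the $\delta$-instance setting to which this lemma is applied these two regimes are exhaustive, so the per-step invariant is established.

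The properness assumption enters by guaranteeing that a forward greedy pattern starts, and a backward greedy pattern ends, at distance at least $\frac{\delta}{4}D$ from both boundaries of $[0,D]$; this prevents the first or last step of a pattern from being pinned by boundary feasibility rather than by the midpoint rule, and so lets the step-by-step invariant hold uniformly at the endpoints. I expect the main obstacle to be the large-demand case analysis, where the strict gap inequalities before and after the step must be combined with the greedy midpoint comparison for the free pattern, and all sub-configurations of the two forced patterns on either side of the free window must be checked. Once the per-step invariant ``$\sigma(k) \to \sigma(k+1)$ is the identity or a $3$-cycle'' is in place, induction on $k$ yields that $\sigma(m)$ is a cyclic permutation of $\sigma(0)$, giving the desired contradiction.
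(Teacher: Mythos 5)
The paper itself gives no proof of this lemma: immediately before stating it, the author writes that \enquote{we refer the reader for complete proofs to~\cite{MR3463048}}, so there is nothing in the paper to compare your argument against. Judged on its own merits, your argument has a genuine gap. The per-step invariant you rely on --- that under the hypothesis \enquote{all pairwise gaps exceed $\frac{1}{2}\delta D$} a single step $k \to k+1$ can only act on the sorted order as the identity or a $3$-cycle --- is false for large demands, and the case analysis you sketch cannot rule out transpositions. Concretely, take $D = 1$, $\delta = \frac{2}{5}$, and a step with $u_{k+1} = \frac{1}{10}$, $v_{k+1} = \frac{1}{2}$, so $d_{k+1} = \frac{3}{5} = (1-\delta)D$; the forward-greedy midpoint threshold is $\frac{1}{2}(D+u_{k+1}-v_{k+1}) = \frac{3}{10}$ and the free window $[\frac{1}{10},\frac{1}{2}]$ has width $\delta D = \frac{2}{5}$. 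If $p_a(k)=\frac{3}{20}$, $p_b(k)=\frac{9}{20}$, $p_c(k)=\frac{49}{50}$ (all three forward greedy), the pairwise gaps $\frac{3}{10}$, $\frac{53}{100}$, $\frac{83}{100}$ all exceed $\frac{1}{2}\delta D = \frac{1}{5}$; after the greedy step $p_a$ moves up to $\frac{13}{20}$, $p_b$ moves down to $\frac{7}{20}$, and $p_c$ is forced down to $\frac{22}{25}$, giving the new order $p_b < p_a < p_c$ --- a transposition --- while the new gaps $\frac{3}{10}$, $\frac{23}{100}$, $\frac{53}{100}$ all still exceed $\frac{1}{5}$.

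This example also pinpoints the specific false sub-claim in your sketch: because the free window has width at most $\delta D$ while the gap hypothesis only forbids gaps of at most $\frac{1}{2}\delta D$, up to \emph{two} of the three patterns can lie inside the window at once (above, both $p_a$ and $p_b$ do), and two such patterns can swap. A second, independent issue is that you treat all three patterns symmetrically via a free-window condition on $p(k)$, but for a \emph{backward} greedy pattern the decision fixing $z_{k+1}$ is made in terms of $p(k+1)$, not $p(k)$, so the forward-greedy picture simply does not describe such a pattern at this step. Since transpositions cannot be excluded step by step, the induction giving \enquote{$\sigma(m)$ is a cyclic permutation of $\sigma(0)$} does not go through, and a correct proof must instead use a more global structural property of greedy patterns rather than a purely local order invariant.
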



At the end of the section reconsider~\cref{fig:bounds}. Both previous and new results are shown with respect to $\delta$ on the left and the consequences for all instances independent of $\delta$ on the right.

\section{Improved Upper Bound}
\label{sec:upper}

In this section we prove~\cref{thm:main}. We start by defining a general framework that allows us to use~\cref{lem:existence,lem:crossover} in a very unified manner. The following definition is at the heart of this framework (see~\cref{fig:proof_def}).

\begin{definition}
\label{def:induced}
Given a greedy pattern $p_{z_a}$ living on a sub-strip of $\left[0,D\right]$ with start point $x_a$ and end point $y_a$, we call a forward greedy pattern $p_{z_b}$ \emph{induced} by $p_{z_a}$, if it lives on a sub-strip of $\left[0,D\right]$ with start point $x_b := \frac{2}{3}\bar{y}_a + \frac{1}{3}x_a$. Likewise, we call a backward greedy pattern $p_{z_c}$ \emph{induced} by $p_{z_a}$, if it lives on a sub-strip of $\left[0,D\right]$ with end point $y_c := \frac{2}{3}\bar{x}_a + \frac{1}{3}y_a$.
\end{definition}

If a greedy pattern $p_{z_a}$ and its induced patterns are proper, the following lemma ensures the existence of a pattern with an additive performance that only depends on the start and end points of $p_{z_a}$ together with $\delta$. It is therefore possible to pick a single pattern, check if its induced patterns are proper, and obtain a strong bound on the additive performance.

\begin{lemma}
\label{lem:induced_bound}
For a $\delta$-instance with $\delta \in \left[0,\frac{1}{2}\right]$, let $p_{z_a}$ be a greedy pattern living on a sub-strip of $\left[0,D\right]$ with start point $x_a$ and end point $y_a$. Denote with $p_{z_b}$ a forward greedy pattern induced by $p_{z_a}$ and with $p_{z_c}$ a backward greedy pattern induced by $p_{z_a}$. If all three greedy patterns are proper, then there exists a pattern with additive performance at most
\begin{equation*}
\max \left\{\frac{4}{3}D - \frac{1}{3}\left(x_a + y_a\right) + \frac{\delta}{2}D, \frac{2}{3}D + \frac{1}{3}\left(x_a + y_a\right) + \frac{\delta}{2}D\right\}.
\end{equation*}
\end{lemma}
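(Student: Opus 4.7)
The strategy is to combine two of the three proper greedy patterns using Lemma~\ref{lem:crossover} and then bound the additive performance of the resulting pattern via Observation~\ref{obs:obj}. The choice of $x_b$ and $y_c$ in Definition~\ref{def:induced} is precisely designed so that any of the three possible pairings yields the same bound.

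First, I would record the three ``known'' sums $x + y$ produced by Lemma~\ref{lem:crossover}. Pairing $p_{z_b}$ (using its start $x_b$) with $p_{z_a}$ (using its end $y_a$) gives
\begin{equation*}
x_b + y_a = \tfrac{2}{3}(D - y_a) + \tfrac{1}{3}x_a + y_a = \tfrac{2D + x_a + y_a}{3},
\end{equation*}
and symmetrically $x_a + y_c = \tfrac{2D + x_a + y_a}{3}$, while $x_b + y_c = \tfrac{4D - (x_a + y_a)}{3}$. These two distinct values are reflections of each other across $D$, which is exactly why they lead to the same final bound.

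Next, I would bound the combined strip. Each of the three patterns lives on a sub-strip of $[0,D]$, so by Lemma~\ref{lem:crossover} applied with $\varepsilon = \tfrac{\delta}{2}D$ to any close pair, the resulting strip $[a,b]$ satisfies $a \geq -\tfrac{\delta}{4}D$ and $b \leq D + \tfrac{\delta}{4}D$. Plugging into Observation~\ref{obs:obj} yields the additive performance bound
\begin{equation*}
\max\bigl\{\, 2b - (x+y),\; (x+y) - 2a \,\bigr\} \;\leq\; \max\bigl\{\, 2D + \tfrac{\delta}{2}D - (x+y),\; (x+y) + \tfrac{\delta}{2}D \,\bigr\}.
\end{equation*}
Substituting $x + y = \tfrac{2D + x_a + y_a}{3}$ turns the two arguments of the $\max$ into $\tfrac{4}{3}D - \tfrac{1}{3}(x_a + y_a) + \tfrac{\delta}{2}D$ and $\tfrac{2}{3}D + \tfrac{1}{3}(x_a + y_a) + \tfrac{\delta}{2}D$, which is exactly the claimed bound; substituting $x + y = \tfrac{4D - (x_a + y_a)}{3}$ swaps the two arguments, giving the identical $\max$.

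Finally, the main obstacle is to guarantee that at least one of the three possible pairs is $\tfrac{\delta}{2}D$-close so that Lemma~\ref{lem:crossover} can actually be invoked. This is where Lemma~\ref{lem:existence} is used: it supplies a close pair whenever the sorting of the three patterns by end points is not a cyclic permutation of the sorting by start points. Since $x_a, x_b$ and $y_a, y_c$ are rigidly fixed by $p_{z_a}$, while the end $y_b$ of $p_{z_b}$ and the start $x_c$ of $p_{z_c}$ are free, I would do a short case analysis exploiting the rigid identity $x_b - x_a = y_c - y_a = \tfrac{2}{3}(D - x_a - y_a)$ (so that $x_b, y_c$ lie on the same side of $x_a, y_a$, determined by the sign of $D - (x_a + y_a)$). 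This interlacing forces the overall sorting to be non-cyclic regardless of where $y_b$ and $x_c$ fall, letting Lemma~\ref{lem:existence} supply the required $\tfrac{\delta}{2}D$-close pair and completing the proof via the computation above.
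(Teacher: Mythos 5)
Your plan correctly identifies the three pairings, the resulting sums $x_b + y_a = x_a + y_c = \tfrac{1}{3}(2D + x_a + y_a)$ and $x_b + y_c = \tfrac{1}{3}(4D - (x_a + y_a))$, the strip expansion $[-\tfrac{\delta}{4}D, D + \tfrac{\delta}{4}D]$ via Lemma~\ref{lem:crossover}, and the clean arithmetic through Observation~\ref{obs:obj}. That part matches the paper and is sound.

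However, the claim that ``the interlacing forces the overall sorting to be non-cyclic regardless of where $y_b$ and $x_c$ fall'' is false, and this is where the proposal has a genuine gap. Suppose $x_a + y_a < D$, so $x_a < x_b$ and $y_a < y_c$. If $x_c < x_a$ while $y_a < y_b < y_c$, the sorting by start points is $(c,a,b)$ and by end points is $(a,b,c)$, which \emph{is} a cyclic rotation of $(c,a,b)$; Lemma~\ref{lem:existence} then gives you nothing, and your argument cannot conclude. The same failure occurs whenever exactly one of $x_c$, $y_b$ lands strictly between $x_a$ and $x_b$ (resp.\ $y_a$ and $y_c$) while the other lands outside.

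The missing ingredient is Observation~\ref{obs:symm}. Using \cref{eq:mid_point}, $\bar{y}_c$ is the midpoint of $x_a$ and $x_b$ and $\bar{x}_b$ is the midpoint of $y_a$ and $y_c$, with all gaps bounded by $\varepsilon := \max\bigl\{\tfrac{1}{3}D - \tfrac{1}{3}(x_a+y_a) + \tfrac{\delta}{2}D,\ \tfrac{1}{3}(x_a+y_a) - \tfrac{1}{3}D + \tfrac{\delta}{2}D\bigr\}$. The paper therefore first disposes of the case where $x_c \in [\bar{y}_c - \varepsilon, \bar{y}_c + \varepsilon]$ (then $p_{z_c}$ already achieves additive performance at most $D + \varepsilon$ by Observation~\ref{obs:symm}, and you are done directly) and symmetrically for $y_b$. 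Only after excluding these ``middle'' cases can one conclude that $x_c$ lies outside the interval between $x_a$ and $x_b$, and $y_b$ outside the interval between $y_a$ and $y_c$, which is exactly the configuration that rules out a cyclic permutation and activates Lemma~\ref{lem:existence}. Without that preliminary reduction your case analysis does not close.
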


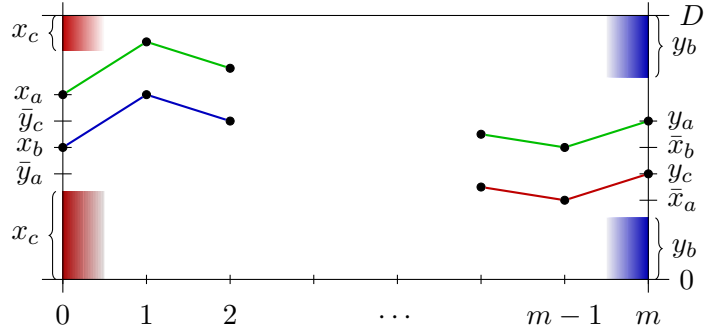
\begin{figure}
\centering
\begin{tikzpicture}[yscale=3.5,xscale=1.1,
node style/.style={draw = black, fill = black, circle, inner sep = 0pt, minimum size = 3pt},
edge style/.style={},
line/.style={draw = black},
fwd/.style={draw,thick},
bwd/.style={draw,thick},
opt/.style={draw=black, thick}
]

\path (-3,0) -- (10, 0); 

\definecolor{a_col}{RGB}{0,190,0}
\definecolor{b_col}{RGB}{0,0,190}
\definecolor{c_col}{RGB}{190,0,0}

\def\tick{.5pt}
\def\off{3pt}

\def\deltaa{1/3}

\pgfmathsetmacro\xa{.7}
\pgfmathsetmacro\xabar{1 - \xa}

\pgfmathsetmacro\ya{.6}
\pgfmathsetmacro\yabar{1 - \ya}

\pgfmathsetmacro\xb{2/3 * \yabar + 1/3 * \xa}
\pgfmathsetmacro\xbbar{1 - \xb}

\pgfmathsetmacro\yc{2/3 * \xabar + 1/3 * \ya}
\pgfmathsetmacro\ycbar{1 - \yc}

\pgfmathsetmacro\eps{1/3 * \xa + 1/3 * \ya -1/3 + \deltaa/2}

\pgfmathsetmacro\ybmin{0}
\pgfmathsetmacro\ybmax{\xbbar - \eps}
\pgfmathsetmacro\ybmintwo{\xbbar + \eps}
\pgfmathsetmacro\ybmaxtwo{1}

\pgfmathsetmacro\xcmin{0}
\pgfmathsetmacro\xcmax{\ycbar - \eps}
\pgfmathsetmacro\xcmintwo{\ycbar + \eps}
\pgfmathsetmacro\xcmaxtwo{1}

\pgfmathsetmacro\ua{.1}
\pgfmathsetmacro\va{.2}

\pgfmathsetmacro\ub{.1}
\pgfmathsetmacro\vb{.8}

\pgfmathsetmacro\uc{.05}
\pgfmathsetmacro\vc{.8}

\pgfmathsetmacro\vd{(\ya - \yc)/2}
\pgfmathsetmacro\ud{1 - \deltaa - \vd}

\foreach \i/\l in {0/0,1/1,2/2,3/,4/\cdots,5/,6/m-1,7/m} {
\draw (\i, -\tick) to[] node[text height = 0pt, below = 13pt] {$\l$} (\i, \tick);
}

\fill[left color=white, right color = b_col] (7,\ybmin cm) -- (7,\ybmax cm) -- (6.5,\ybmax cm) -- (6.5,\ybmin cm);
\fill[left color=white, right color = b_col] (7,\ybmintwo cm) -- (7,\ybmaxtwo cm) -- (6.5,\ybmaxtwo cm) -- (6.5,\ybmintwo cm);
\fill[left color=c_col, right color = white] (0,\xcmin cm) -- (0,\xcmax) -- (.5,\xcmax) -- (.5,\xcmin cm);
\fill[left color=c_col, right color = white] (0,\xcmintwo cm) -- (0,\xcmaxtwo cm) -- (.5,\xcmaxtwo cm) -- (.5,\xcmintwo cm);

\path[line] (7.25,0) to[] node[pos = 0, right] {$0$} (-.25,0);
\path[line] (7.25,1) to[]  node[pos = 0, right] {$D$} (-.25,1);
\path[line] (0,-.05) to[] (0, 1.05);
\path[line] (7,-.05) to[] (7,1.05);

\draw (-\off,\xa cm) to[] node[pos = 0, left] {$x_a$} (\off,\xa cm);
\draw (-\off,\xb cm) to[] node[pos = 0, left] {$x_b$} (\off,\xb cm);
\draw (-\off,\ycbar cm) to[] node[pos = 0, left] {$\bar{y}_c$} (\off,\ycbar cm);
\draw (-\off,\yabar cm) to[] node[pos = 0, left] {$\bar{y}_a$} (\off,\yabar cm);

\draw (7cm -\off,\xabar cm) to[] node[pos = 1, right] {$\bar{x}_a$} (7cm +\off,\xabar cm);
\draw (7cm-\off,\xbbar cm) to[] node[pos = 1, right] {$\bar{x}_b$} (7cm +\off,\xbbar cm);
\draw (7cm -\off,\yc cm) to[] node[pos = 1, right] {$y_c$} (7cm +\off,\yc cm);
\draw (7cm -\off,\ya cm) to[] node[pos = 1, right] {$y_a$} (7cm +\off,\ya cm);

\draw[decorate, decoration={brace}, xshift = -.5ex]  (0,\xcmin cm) -- node[left=.5ex] {$x_c$}  (0,\xcmax);
\draw[decorate, decoration={brace}, xshift = -.5ex]  (0,\xcmintwo) -- node[left=.5ex] {$x_c$}  (0,\xcmaxtwo cm);

\draw[decorate, decoration={brace, aspect=.5}, xshift=.51ex]  (7,\ybmax cm) -- node[pos=.5, right=.5ex] {$y_b$}  (7,\ybmin cm);
\draw[decorate, decoration={brace, aspect=.5}, xshift=.51ex]  (7,\ybmaxtwo cm) -- node[pos=.5, right=.5ex] {$y_b$}  (7,\ybmintwo);

\node[node style] (010fwd) at (7,\ya cm) {};
\node[node style] (100fwd) at (6,\ya cm - \vd cm) {};
\node[node style] (200fwd) at (5,\ya cm - \vd cm + \uc cm) {};

\path[edge style,bwd,a_col] (010fwd) to[] (100fwd);
\path[edge style,bwd,a_col] (100fwd) to[] (200fwd);

\node[node style] (002bwd) at (0,\xa cm) {};
\node[node style] (102bwd) at (1,\xa cm + \va cm) {};
\node[node style] (202bwd) at (2,\xa cm + \va cm - \ub cm) {};

\path[edge style,fwd,a_col] (002bwd) to[] (102bwd);
\path[edge style,fwd,a_col] (102bwd) to[] (202bwd);

\node[node style] (821fwd) at (5,\yc cm - \vd cm + \uc cm) {};
\node[node style] (921fwd) at (6,\yc cm - \vd cm) {};
\node[node style] (101fwd) at (7,\yc cm) {};

\path[edge style,bwd,c_col] (821fwd) to[] (921fwd);
\path[edge style,bwd,c_col] (921fwd) to[] (101fwd);

\node[node style] (002fwd) at (0,\xb cm) {};
\node[node style] (102fwd) at (1,\xb cm + \va cm) {};
\node[node style] (202fwd) at (2,\xb cm + \va cm - \ub cm) {};

\path[edge style,fwd,b_col] (002fwd) to[] (102fwd);
\path[edge style,fwd,b_col] (102fwd) to[] (202fwd);
\end{tikzpicture}
\caption{An illustration of a greedy pattern $p_{z_a}$ with induced forward greedy pattern $p_{z_b}$ and induced backward greedy pattern $p_{z_c}$ as defined in~\cref{def:induced}.}
\label{fig:proof_def}
\end{figure}

\begin{proof}
We first show that the sorting of the start points is not a cyclic permutation of the sorting of the end points. This allows us to use~\cref{lem:existence} that guarantees the existence of two patterns that are $\frac{\delta}{2}D$-close. We then conclude the lemma by showing that if any two of the three patterns are $\frac{\delta}{2}D$-close, that there exists a pattern with the required additive performance. In~\cref{fig:proof_def} is an illustration of the procedure.

By the definition of $p_{z_b}$ as forward greedy pattern induced by $p_{z_a}$ and $p_{z_c}$ as backward greedy pattern induced by $p_{z_a}$, we know that $x_b = \frac{2}{3}\bar{y}_a + \frac{1}{3}x_a$ and $y_c = \frac{2}{3}\bar{x}_a + \frac{1}{3}y_a$. The definitions are such that the interval between $x_a$ and $\bar{y}_a$ is divided into three equal parts by the points $x_b$ and $\bar{y}_c$. A straightforward computation shows that
\begin{equation}
\label{eq:mid_point}
\bar{y}_c = \frac{x_a + x_b}{2},\qquad\bar{x}_b = \frac{y_a + y_c}{2},
\end{equation}
i.e. the optimal start point for pattern $p_{z_c}$ is in the middle between $x_a$ and $x_b$, and the optimal end point of $p_{x_b}$ is in the middle between $y_a$ and $y_c$. Because $\left|x_a - \bar{y}_a\right| = \left|\bar{x}_a - y_a\right|$, and the symmetric definitions of $x_b$ and $y_c$, it also holds that $\left|x_a - x_b\right| = \left|y_a - y_c\right|$.

For the sake of brevity we define $\varepsilon := \max\left\{\frac{1}{3}D - \frac{1}{3}\left(x_a + y_a\right) + \frac{\delta}{2}D, \frac{1}{3}\left(x_a + y_a\right) - \frac{1}{3}D + \frac{\delta}{2}D\right\}$. In fact, we want to show that there exists a pattern with additive performance at most $D + \varepsilon$.

We now show that $\left| x_a - x_b \right|\leq 2\varepsilon$, which also implies that $\left|y_a - y_c\right| \leq 2\varepsilon$. By definition of $x_b = \frac{2}{3}\bar{y}_a + \frac{1}{3}x_a = \frac{2}{3}D - \frac{2}{3}y_a + \frac{1}{3}x_a$, we can conclude that
\begin{equation}
\label{eq:dist}
\begin{split}
\left| x_a - x_b \right| &= \max \left\{ x_b - x_a , x_a - x_b \right\} = \max \left\{ \frac{2}{3}D - \frac{2}{3}\left(x_a + y_a\right), \frac{2}{3}\left(x_a + y_a\right) - \frac{2}{3}D\right\} \\
& \leq 2 \max \left\{ \frac{1}{3}D - \frac{1}{3}\left(x_a + y_a\right) + \frac{\delta}{2}D, \frac{1}{3}\left(x_a + y_a\right) - \frac{1}{3}D + \frac{\delta}{2}D \right\} = 2 \varepsilon,
\end{split}
\end{equation}
where the inequality follows from the fact that $\delta \geq 0$. With~\cref{eq:mid_point,eq:dist}, it follows that
\begin{equation}
\label{eq:dist_2}
\left|\bar{y}_c - x_b \right| = \left|\bar{y}_c - x_a\right| = \left|\bar{x}_b - y_a \right| = \left|\bar{x}_b - y_c\right| \leq \varepsilon.
\end{equation}

If for the start point of $p_{z_c}$ holds $x_c \in \left[\bar{y}_c - \varepsilon, \bar{y}_c + \varepsilon\right]$, we know by~\cref{obs:symm} that the additive performance of $p_{z_c}$ is at most $D + \varepsilon$, from which the lemma follows. We can thus assume that $x_c \in \left[0, \bar{y}_c - \varepsilon\right] \cup \left[\bar{y}_c + \varepsilon, D\right]$. Using~\cref{eq:dist_2}, we can therefore conclude that either $x_c \leq \bar{y}_c -\varepsilon \leq \min \left\{ x_a, x_b\right\}$ or $\max \left\{ x_a, x_b \right\} \leq \bar{y}_c + \varepsilon \leq x_c$.

Equivalently, if for the end point of $p_{z_b}$ holds $y_b \in \left[\bar{x}_b - \varepsilon, \bar{x}_b + \varepsilon\right]$, we know by~\cref{obs:symm} that the additive performance of $p_{z_b}$ is at most $D + \varepsilon$, from which the lemma follows. We can thus assume that $y_b \in \left[0, \bar{x}_b - \varepsilon\right] \cup \left[\bar{x}_b + \varepsilon, D\right]$. Using~\cref{eq:dist_2}, we can therefore conclude that either $y_b \leq \bar{x}_b -\varepsilon \leq \min \left\{ y_a, y_c\right\}$ or $\max \left\{ y_a, y_c \right\} \leq \bar{x}_b + \varepsilon \leq y_b$.

Assume first that $\bar{y}_a \leq x_a$ (as shown in~\cref{fig:proof_def}), then $\bar{y}_a \leq x_b \leq \bar{y}_c \leq x_a$ and $\bar{x}_a \leq y_c \leq \bar{x}_b \leq y_a$, which implies that either $x_c \leq x_b \leq x_a$ or $x_b \leq x_a \leq x_c$ and that either $y_b \leq y_c \leq y_a$ or $y_c \leq y_a \leq y_b$. In either case, the sorting of the patterns by their start points is not a cyclic permutation of the patterns by their end points.

Assume now that $x_a \leq \bar{y}_a$, then $x_a \leq \bar{y}_c \leq x_b \leq \bar{y}_a$ and $y_a \leq \bar{x}_b \leq y_c \leq \bar{x}_a$, which implies that either $x_c \leq x_a \leq x_b$ or $x_a \leq x_b \leq x_c$ and that either $y_b \leq y_a \leq y_c$ or $y_a \leq y_c \leq y_b$. In either case, the sorting of the patterns by their start points is not a cyclic permutation of the patterns by their end points.

As $p_{z_a}$, $p_{z_b}$ and $p_{z_c}$ are proper greedy patterns, we can apply~\cref{lem:existence}, ensuring the existence of two patterns that are $\frac{\delta}{2}D$-close. We conclude the proof by showing that the closeness of any two patterns guarantees the existence of a pattern with the claimed additive performance.

\begin{enumerate}[i)]
\item Assume that $p_{z_a}$ and $p_{z_b}$ are $\frac{\delta}{2}D$-close. Then Lemma~\ref{lem:crossover} assures the existence of a pattern with start point $x$ and end point $y$ such that $x + y = x_b + y_a = \frac{2}{3}D + \frac{1}{3}\left(x_a + y_a\right)$ on a sub-strip of $\left[-\frac{\delta}{4}D,D + \frac{\delta}{4}D\right]$. Using~\cref{obs:obj}, a straightforward calculation shows that this pattern has additive performance at most
\begin{equation*}
\max \left\{ \frac{4}{3}D - \frac{1}{3}\left(x_a + y_a\right) + \frac{\delta}{2}D, \frac{2}{3}D + \frac{1}{3}\left(x_a + y_a\right) + \frac{\delta}{2}D\right\}.
\end{equation*}

\item Assume that $p_{z_a}$ and $p_{z_c}$ are $\frac{1}{2}\delta D$-close. Then Lemma~\ref{lem:crossover} assures the existence of a pattern with start point $x$ and end point $y$ such that $x + y = x_a + y_c = \frac{2}{3}D + \frac{1}{3}\left(x_a + y_a\right)$ on a sub-strip of $\left[-\frac{\delta}{4}D,D + \frac{\delta}{4}D\right]$. Using~\cref{obs:obj}, a straightforward calculation shows that this pattern has additive performance at most
\begin{equation*}
\max \left\{ \frac{4}{3}D - \frac{1}{3}\left(x_a + y_a\right) + \frac{\delta}{2}D, \frac{2}{3}D + \frac{1}{3}\left(x_a + y_a\right) + \frac{\delta}{2}D\right\}.
\end{equation*}

\item Assume that $p_{z_b}$ and $p_{z_c}$ are $\frac{1}{2}\delta D$-close. Then Lemma~\ref{lem:crossover} assures that there exists a pattern with start point $x$ and end point $y$ such that $x + y = x_b + y_c = \frac{4}{3}D - \frac{1}{3}\left(x_a + y_a\right)$ on a sub-strip of $\left[-\frac{\delta}{4}D,D + \frac{\delta}{4}D\right]$. Using~\cref{obs:obj}, a straightforward calculation shows that this pattern has additive performance at most
\begin{equation*}
\max \left\{ \frac{2}{3}D + \frac{1}{3}\left(x_a + y_a\right) + \frac{\delta}{2}D, \frac{4}{3}D - \frac{1}{3}\left(x_a + y_a\right) + \frac{\delta}{2}D\right\}.
\end{equation*}
\end{enumerate}
In either case, the lemma follows.
\end{proof}

If both the start point and the end point of a greedy pattern $p_{z_a}$ are far enough away from the boundary, the next lemma ensures that its induced patterns are proper. Note that this is a stronger requirement on $p_{z_a}$ than being a proper greedy pattern.

\begin{lemma}
\label{lem:strong_induced}
For a $\delta$-instance with $\delta \in \left[0,\frac{1}{2}\right]$, let $p_{z_a}$ be a greedy pattern living on a sub-strip of $\left[0,D\right]$ with start point $x_a$ and end point $y_a$. Denote with $p_{z_b}$ a forward greedy pattern induced by $p_{z_a}$ and with $p_{z_c}$ a backward greedy pattern induced by $p_{z_a}$. If $x_a,y_a \in \left[\frac{\delta}{4}D, \left(1 - \frac{\delta}{4}\right)D\right]$, then $p_{z_a}$, $p_{z_b}$ and $p_{z_c}$ are proper.
\end{lemma}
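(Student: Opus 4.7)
The plan is to verify each of the three properness conditions directly from the formulas for $x_b$ and $y_c$ in Definition~\ref{def:induced}. The key observation is that the interval $[\frac{\delta}{4}D, (1-\frac{\delta}{4})D]$ is symmetric under the reflection $x \mapsto \bar{x} = D - x$, so the hypothesis $x_a, y_a \in [\frac{\delta}{4}D, (1-\frac{\delta}{4})D]$ immediately gives $\bar{x}_a, \bar{y}_a \in [\frac{\delta}{4}D, (1-\frac{\delta}{4})D]$ as well.

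For $p_{z_a}$: by the definition of properness, a forward greedy pattern is proper iff its start point lies in $[\frac{\delta}{4}D, (1-\frac{\delta}{4})D]$ and a backward greedy pattern is proper iff its end point lies there. Either case is directly covered by the hypothesis on $x_a$ and $y_a$.

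For $p_{z_b}$: I would observe that $x_b = \frac{2}{3}\bar{y}_a + \frac{1}{3} x_a$ is a convex combination of $\bar{y}_a$ and $x_a$, both of which lie in $[\frac{\delta}{4}D, (1-\frac{\delta}{4})D]$, hence so does $x_b$; this is precisely the properness condition for the forward greedy pattern $p_{z_b}$. The same argument applied to $y_c = \frac{2}{3}\bar{x}_a + \frac{1}{3}y_a$ handles $p_{z_c}$. There is no real obstacle here; the only thing to state carefully is the reflection symmetry of the allowed interval, after which the proof reduces to the convexity of an interval under taking convex combinations.
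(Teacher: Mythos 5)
Your argument is correct and follows exactly the paper's own proof: note the reflection symmetry of the interval $\left[\frac{\delta}{4}D, \left(1-\frac{\delta}{4}\right)D\right]$ under $x\mapsto\bar x$, observe that $x_b$ and $y_c$ are convex combinations of points in that interval, and conclude properness of all three patterns. No discrepancy.
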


\begin{proof}
By definition, $p_{z_a}$ is proper. For $x \in \left[0,D\right]$ holds that $\bar{x} \in \left[\frac{\delta}{4}D, \left(1 - \frac{\delta}{4}\right)D\right]$ if and only if $x \in \left[\frac{\delta}{4}D, \left(1 - \frac{\delta}{4}\right)D\right]$. By assumption, we therefore also know that $\bar{x}_a$ and $\bar{y}_a$ are far enough away from the boundary. By the definition of induced patterns, we know that $x_b = \frac{2}{3}\bar{y}_a + \frac{1}{3}x_a$, which implies in particular that $\min \{x_a, \bar{y}_a\} \leq x_b \leq \max \{x_a, \bar{y}_a \}$. The start point $x_b$ of the induced forward greedy pattern is consequently far enough away from the boundary. Using the same argumentation for the definition of $y_c = \frac{2}{3}\bar{x}_a + \frac{1}{3}y_a$, the lemma follows.
\end{proof}

A crucial part of the proof of~\cref{thm:main}, and the main contribution of this work is the following auxiliary lemma.

\begin{lemma}
\label{lem:upper}
For a $\delta$-instance with $\delta \in \left[0, \frac{1}{2}\right]$ there exists a pattern with additive performance at most $\left(\frac{7}{6} + \frac{\delta}{3}\right) D$.
\end{lemma}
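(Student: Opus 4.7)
The plan is to reduce the proof to a single application of Lemma~\ref{lem:induced_bound}. A short calculation on the bound there shows that $\max\{\frac{4}{3}D - \frac{1}{3}(x_a+y_a) + \frac{\delta}{2}D,\ \frac{2}{3}D + \frac{1}{3}(x_a+y_a) + \frac{\delta}{2}D\}$ is at most $(\frac{7}{6}+\frac{\delta}{3})D$ precisely when $x_a+y_a \in [\frac{1+\delta}{2}D,\frac{3-\delta}{2}D]$. Combined with Lemma~\ref{lem:strong_induced}, which guarantees properness of both induced patterns whenever $x_a,y_a \in [\frac{\delta}{4}D,(1-\frac{\delta}{4})D]$, the entire task reduces to exhibiting a proper greedy pattern $p_{z_a}$ with both start and end points in $[\frac{\delta}{4}D,(1-\frac{\delta}{4})D]$ and with their sum in $[\frac{1+\delta}{2}D,\frac{3-\delta}{2}D]$.

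The first candidate I would try is the forward greedy pattern starting at $x_a = \frac{D}{2}$. With this start point the strong-properness condition and the sum condition collapse to the single requirement $y_a \in [\frac{\delta}{2}D,(1-\frac{\delta}{2})D]$ (a direct check also verifies that, for $\delta\leq\frac{1}{2}$, the two induced start/end points automatically fall inside $[\frac{\delta}{4}D,(1-\frac{\delta}{4})D]$). If $y_a$ lies in this interval, Lemma~\ref{lem:induced_bound} delivers the claimed bound at once. Otherwise, by the reflection symmetry of the problem across $\frac{D}{2}$ (exchanging $u_i \leftrightarrow v_i$ and reflecting positions), I may assume without loss of generality that $y_a < \frac{\delta}{2}D$.

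For this hard case I plan to introduce one or two further proper greedy patterns---natural candidates being a shifted forward greedy pattern (e.g.\ starting at $\frac{D}{2}+\frac{\delta}{2}D$) and the backward greedy pattern ending at $\frac{D}{2}$, all of which are proper for $\delta \leq \frac{1}{2}$---and apply Lemma~\ref{lem:existence} to a suitable triple. The triple will be arranged so that the sorting of its end points is \emph{not} a cyclic permutation of the sorting of its start points, forcing two of the three to be $\frac{\delta}{2}D$-close. Lemma~\ref{lem:crossover} then glues them into a single pattern with start $x$ and end $y$ satisfying $x+y$ equal to a specific combination of the originals' endpoints, living on a strip of width at most $D+\frac{\delta}{2}D$. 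By Observation~\ref{obs:obj} the additive performance of this glued pattern is at most $(\frac{7}{6}+\frac{\delta}{3})D$ as long as $x+y$ lies in the wider window $[\frac{5+\delta}{6}D,\frac{7-\delta}{6}D]$; the shift in candidate patterns is chosen precisely so that this wider condition is automatic when $y_a < \frac{\delta}{2}D$.

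The main obstacle I expect is the case analysis behind the hard case: one must verify that the specific triple can always be chosen so that (i) the sorting condition of Lemma~\ref{lem:existence} is satisfied, and (ii) the resulting combined pattern lands with $x+y$ in $[\frac{5+\delta}{6}D,\frac{7-\delta}{6}D]$. Concretely, this means ruling out every configuration in which the forward greedy from $\frac{D}{2}$ drifts extremely low \emph{and} each natural alternative pattern also fails to satisfy the required sum condition. The careful bookkeeping---tracking which convex combination of start and end points Lemma~\ref{lem:crossover} produces in each sub-case, and confirming that it is always captured by the wider window---is where the real work of the proof lies.
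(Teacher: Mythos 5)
Your reduction of Lemma~\ref{lem:induced_bound} to the condition $x_a+y_a \in \left[\frac{1+\delta}{2}D,\frac{3-\delta}{2}D\right]$ is a correct reformulation, and the observation that a forward greedy pattern started at $x_a=\frac{D}{2}$ handles the case $y_a \in \left[\frac{\delta}{2}D,\left(1-\frac{\delta}{2}\right)D\right]$ is sound. But the proof has a genuine gap precisely in the ``hard case'' you flag as the real work: you have not done it, and as sketched it does not go through.

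The paper's central idea, which your proposal is missing, is to choose the \emph{primary} pattern $p_{z_a}$ so that its \emph{end point} is controlled, not its start point. This is done by rotating the instance so the special demand $d_m \in \{\delta D,(1-\delta)D\}$ comes last, deleting it, running a backward greedy ending at $\frac{1}{2}\left(D+d_m\right)-v_m$, and then observing that both extensions $y_a^1 = \frac{1}{2}\left(D-d_m\right)$ and $y_a^2=\frac{1}{2}\left(D+d_m\right)$ yield valid backward greedy patterns, so one may pick the extension according to whether $x_a \le \frac{D}{2}$. That construction fixes $y_a = \frac{D}{2}\pm\frac{\delta}{2}D$ exactly and bounds $x_a$ to one side, which is what makes the ``Obs.~\ref{obs:symm} fails'' branch pin down $x_a+y_a$ to the required window. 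Your pattern does the opposite: $x_a=\frac{D}{2}$ is controlled but $y_a$ is free, so when $y_a<\frac{\delta}{2}D$ nothing forces the sum into range, and Lemma~\ref{lem:induced_bound} applied to $p_{z_a}$ alone can give a bound as large as $\left(\frac{7}{6}+\frac{\delta}{2}\right)D$.

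Your proposed repair---a triple consisting of $p_{z_a}$, a forward greedy from $\frac{D}{2}+\frac{\delta}{2}D$, and a backward greedy ending at $\frac{D}{2}$---does not automatically satisfy the sorting hypothesis of Lemma~\ref{lem:existence}, and in the cases where it does, the resulting glued sum is not always in the widened window. Concretely: with $y_a<\frac{\delta}{2}D<\frac{D}{2}=y_c$ known and $y_b, x_c$ uncontrolled, the start-order $\left(x_a,x_b,x_c\right)$ and end-order $\left(y_a,y_b,y_c\right)$ can coincide (for instance $x_c>\frac{D}{2}+\frac{\delta}{2}D$ and $y_a<y_b<\frac{D}{2}$), which is a cyclic permutation of itself, so Lemma~\ref{lem:existence} gives nothing. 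Even when two patterns are close, the gluing that yields a controlled sum, $x_b+y_a$, satisfies only $x_b+y_a\ge\frac{1+\delta}{2}D$, which for $\delta<1$ lies strictly below your required $\frac{5+\delta}{6}D$; the alternative gluing $x_a+y_b$ depends on the uncontrolled $y_b$. So the hard case needs a different construction, and the missing ingredient is exactly the paper's use of $d_m$ to manufacture a pattern whose end point is pinned at $\frac{D}{2}\pm\frac{\delta}{2}D$.
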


\begin{proof}
We start the proof by modifying the instance such that the special demand of value either $\delta D$ or $\left(1 - \delta\right)D$ is the last demand. These two cases will be treated separately. In either case, we then use the nice structure of the newly created instance to find a greedy pattern that can be used with Lemma~\ref{lem:induced_bound}.

Let $d_i$ be the demand that minimizes $\left|\frac{1}{2}D - d_i\right|$ over all $i \in \left[m\right]$. By the definition of a $\delta$-instance, we know that $d_i \in \left\{\delta D, \left(1 - \delta\right)D\right\}$.

We now rotate the instance such that the specially chosen demand $d_i$ has index $m$, and is thus the last demand of the instance. By a slight abuse of notation we will refer to this newly created instance again as instance. Recall that now the demand $d_m$ has the property that $d_m \in \left\{ \delta D, \left(1 - \delta\right)D\right\}$.

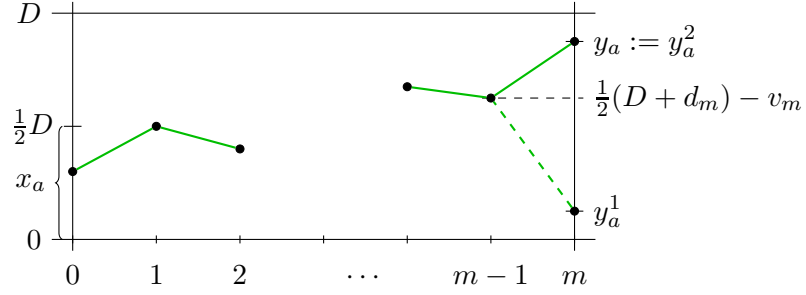
\begin{figure}
\centering
\begin{tikzpicture}[yscale=3,xscale=1.1,
node style/.style={draw = black, fill = black, circle, inner sep = 0pt, minimum size = 3pt},
edge style/.style={},
line/.style={draw = black},
fwd/.style={draw,thick},
bwd/.style={draw,thick},
opt/.style={draw=black, thick}
]

\path (-3,0) -- (9, 0); 

\definecolor{a_col}{RGB}{0,190,0}
\definecolor{b_col}{RGB}{0,0,190}
\definecolor{c_col}{RGB}{190,0,0}

\def\tick{.5pt}
\def\off{3pt}

\def\deltaa{1/4}
\pgfmathsetmacro\gammaa{1/6 + \deltaa/3}

\pgfmathsetmacro\xa{.3}

\pgfmathsetmacro\ya{1 - \deltaa/2}
\pgfmathsetmacro\yaopp{1 - \ya}

\pgfmathsetmacro\ua{.1}
\pgfmathsetmacro\va{.2}

\pgfmathsetmacro\ub{.1}
\pgfmathsetmacro\vb{.8}

\pgfmathsetmacro\uc{.05}
\pgfmathsetmacro\vc{.8}

\pgfmathsetmacro\vd{\ya/3.5}
\pgfmathsetmacro\ud{1 - \deltaa - \vd}

\foreach \i/\l in {0/0,1/1,2/2,3/,4/,5/m-1,6/m} {
\draw (\i, -\tick) to[] node[text height = 0pt, below = 13pt] {$\l$} (\i, \tick);
}
\path (3.5, -\tick) to[] node[text height = 0pt, below = 13pt] {$\cdots$} (3.5, \tick);

\draw[decorate, decoration={brace}, xshift = -.75ex]  (0,0 cm) -- node[left=.5ex] {$x_a$}  (0,.5);

\path[line] (6.25,0) to[] node[pos = 1, left] {$0$} (-.25,0);
\path[line] (6.25,1) to[]  node[pos = 1, left] {$D$} (-.25,1);
\path[line] (0,-.05) to[] (0, 1.05);
\path[line] (6,-.05) to[] (6,1.05);

\draw (-\off,.5 cm) to[] node[pos = 0, left] {$\frac{1}{2}D$} (\off,.5 cm);

\draw (6cm -\off, \ya cm) to[] node[pos = 1, right] {$y_a := y_a^2$} (6cm + \off,\ya cm);
\draw[dashed] (5cm - \off,\ya cm - \vd cm) to[] node[pos = 1, right] {$\frac{1}{2}(D + d_m) - v_m$} (6cm + \off,\ya cm - \vd cm);
\draw (6cm - \off,\yaopp cm) to[] node[pos = 1, right] {$y_a^1$} (6cm + \off,\yaopp cm);

\node[node style] (000fwd) at (6,\yaopp cm) {};
\node[node style] (010fwd) at (6,\ya cm) {};
\node[node style] (100fwd) at (5,\ya cm - \vd cm) {};
\node[node style] (200fwd) at (4,\ya cm - \vd cm + \uc cm) {};

\path[edge style,bwd,a_col, dashed] (000fwd) to[] (100fwd);
\path[edge style,bwd,a_col] (010fwd) to[] (100fwd);
\path[edge style,bwd,a_col] (100fwd) to[] (200fwd);

\node[node style] (002bwd) at (0,\xa cm) {};
\node[node style] (102bwd) at (1,\xa cm + \va cm) {};
\node[node style] (202bwd) at (2,\xa cm + \va cm - \ub cm) {};

\path[edge style,fwd,a_col] (002bwd) to[] (102bwd);
\path[edge style,fwd,a_col] (102bwd) to[] (202bwd);
\end{tikzpicture}
\caption{An example of the construction step in~\cref{lem:upper}. If $x_a \leq \frac{1}{2}D$, we extend the pattern with $y_a = \frac{1}{2}(D + d_m)$.}
\label{fig:proof_ex}
\end{figure}

The following procedure is similar to the one described by Skutella~\cite{MR3463048} when dealing with demands of medium size. An example is depicted in Figure~\ref{fig:proof_ex}. We first delete the last demand $m$ to obtain a smaller instance. We define a backward greedy pattern ending at $\frac{1}{2}\left(D + d_m\right) - v_m$ and starting at some $x_a \in \left[0,D\right]$. This backward greedy pattern can be extended in two possible ways to create a pattern that includes demand $m$, once with end point $y_a^1 := \frac{1}{2}\left(D - d_m\right)$ and once with end point $y_a^2 := \frac{1}{2}\left(D + d_m\right)$. A crucial observation is that both possible extensions produce a valid backward greedy pattern for the original instance. Depending on the particular start point $x_a$, we choose in which way the pattern will be extended: If $x_a \leq \frac{1}{2}D$, we extend the pattern with end point $y_a^2$, otherwise we extend the pattern with $y_a^1$. For the rest of the proof, we may assume that $x_a$ is at most $\frac{1}{2}D$ and the pattern is therefore extended with end point $y_a := y_a^2$. This assumption can be made, as the following construction is highly symmetric with respect to $y_a^1$ and $y_a^2$, in fact, all arguments remain valid if we change start and end points of subsequent patterns by reflecting their value around $\frac{1}{2}D$. Let $p_{z_a}$ denote the resulting backward greedy pattern starting at $x_a$ and ending at $y_a$.

We consider two cases, first that $d_m = \left(1 - \delta\right)D$ and second that $d_m = \delta D$. For the sake of brevity, we define $\varepsilon := \frac{1}{6}D + \frac{\delta}{3}D$. In fact, we want to find a pattern with additive performance at most $D + \varepsilon$.

\paragraph{Case a)} If $d_m = \left(1 -\delta\right)D$, we can rewrite $y_a = \frac{1}{2}\left(D + d_m\right) = D - \frac{\delta}{2}D$. The lemma follows immediately from~\cref{obs:symm} if $x_a \in \left[\bar{y}_a - \varepsilon, \bar{y}_a + \varepsilon\right]$. We can therefore assume that $x_a$ falls either into the interval $\left[0, \frac{\delta}{6}D - \frac{1}{6}D\right]$ or into the interval $\left[\frac{1}{6}D + \frac{5}{6}\delta D, \frac{1}{2}D\right]$. Recall the assumption that $x_a$ is at most $\frac{1}{2}D$. It is easy to see that $\frac{\delta}{6}D - \frac{1}{6}D$ is negative for all $\delta \in \left[0,\frac{1}{2}\right]$. It follows that $x_a \in \left[\frac{1}{6}D + \frac{5}{6}\delta D, \frac{1}{2}D\right]$. Note that this interval is also empty for all $\delta > \frac{2}{5}$, and the lemma is trivially correct. In fact, this is exactly the argumentation used by Skutella~\cite{MR3463048} in his proof of Lemma~\ref{lem:small}.

As $y_a = D - \frac{\delta}{2}D \in \left[\frac{\delta}{4}D, \left(1 - \frac{\delta}{4}\right)D\right]$, the backward greedy pattern $p_{z_a}$ is proper. Because $\frac{1}{2}D \geq x_a \geq \frac{1}{6}D + \frac{5}{6}\delta D \geq \frac{\delta}{4}D$, it furthermore holds that $x_a$ is far enough away from the boundary. We can thus apply~\cref{lem:strong_induced} together with~\cref{lem:induced_bound} and the fact that $x_a + y_a \in \left[ \frac{7}{6}D + \frac{\delta}{3}D, \frac{3}{2}D - \frac{\delta}{2}D\right]$ to obtain a pattern with additive performance at most
\begin{equation*}
\max \left\{ \frac{17}{18}D + \frac{7}{18} \delta D, \frac{7}{6}D + \frac{\delta}{3}D \right\} = \left(\frac{7}{6} + \frac{\delta}{3}\right) D.
\end{equation*} 

\paragraph{Case b)} If $d_m = \delta D$, we can rewrite $y_a = \frac{1}{2}\left(D + d_m\right) = \frac{1}{2}D + \frac{\delta}{2}D$. The lemma follows immediately from~\cref{obs:symm} if $x_a \in \left[\bar{y}_a - \varepsilon, \bar{y}_a + \varepsilon \right]$. We can therefore assume that $x_a$ falls either into the interval $\left[0, \frac{1}{3}D - \frac{5}{6}\delta D\right]$ or into the interval $\left[\frac{2}{3}D - \frac{\delta}{6} D, \frac{1}{2}D\right]$. Recall the assumption that $x_a$ is at most $\frac{1}{2}D$. It is easy to see that $\frac{2}{3}D - \frac{\delta}{6}D \geq \frac{1}{2}D$ for all $\delta \in \left[0,\frac{1}{2}\right]$. It follows that $x_a \in \left[0, \frac{1}{3}D - \frac{5}{6}\delta D\right]$. Note that this interval is also empty for all $\delta > \frac{2}{5}$, and the lemma is trivially correct. We need this assumption, when arguing that we can apply Lemma~\ref{lem:induced_bound}.

As $y_a = \frac{1}{2}D + \frac{\delta}{2}D \in \left[\frac{\delta}{4}D, \left(1 - \frac{\delta}{4}\right)D\right]$, the backward greedy pattern $p_{z_a}$ is proper. As $x_a$ might be zero, we cannot apply~\cref{lem:strong_induced}. We therefore have to argue that the induced patterns are proper. Let $p_{z_b}$ be a forward greedy pattern induced by $p_{z_a}$ and $p_{z_c}$ be a backward greedy pattern induced by $p_{z_a}$. By definition, we have $x_b = \frac{2}{3}\bar{y}_a + \frac{1}{3}x_a$ and $y_c = \frac{2}{3}\bar{x}_a + \frac{1}{3}y_a$. By substituting the definitions and bounds of $y_a$ and $x_a$, we obtain
\begin{equation*}
x_b = \frac{1}{3}D - \frac{\delta}{3}D + \frac{1}{3}x_a \geq \frac{1}{3}D - \frac{\delta}{3}D \geq \frac{\delta}{4}D.
\end{equation*}
The start point $x_b$ is therefore far enough away from the boundary and the pattern $p_{z_b}$ is thus proper. We similarly obtain
\begin{equation*}
y_c = \frac{5}{6}D + \frac{\delta}{6}D - \frac{2}{3}x_a \leq \frac{5}{6}D + \frac{\delta}{6}D \leq D - \frac{\delta}{4}D,
\end{equation*}
for all $\delta \in \left[0,\frac{2}{5}\right]$. As we assumed that $\delta \leq \frac{2}{5}$, the backward greedy pattern $p_{z_c}$ induced by $p_{z_a}$ is proper. We can thus apply~\cref{lem:induced_bound} together with the fact that $x_a + y_a \in \left[ \frac{1}{2}D + \frac{\delta}{2}D, \frac{5}{6}D - \frac{\delta}{3}D\right]$ to obtain a pattern with additive performance at most
\begin{equation*}
\max \left\{ \frac{7}{6}D + \frac{\delta}{3}D, \frac{17}{18}D + \frac{7}{18} \delta D\right\} = \left(\frac{7}{6} + \frac{\delta}{3}\right) D.
\end{equation*} 
In either case, the lemma follows.
\end{proof}

An easy consequence of Lemma~\ref{lem:upper} is that there exists for any split routing solution a pattern with additive performance at most $\frac{4}{3}D$, which already improves upon the best known previous result of $\frac{19}{14}D$ from Skutella~\cite{MR3463048}. However, when combined with Skutellas~\cite{MR3463048} result on instances with medium demands (see~\cref{lem:small}), we obtain our main Theorem~\ref{thm:main}.

\begin{lemma}[\cite{MR3463048}]
\label{lem:small}
For any $\delta$-instance with $\delta \in \left[0,\frac{1}{2}\right]$ there exists a pattern with additive performance at most $\left(\frac{3}{2} - \frac{\delta}{2}\right)D$.
\end{lemma}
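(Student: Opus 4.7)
The plan is to adapt the construction idea of \cref{lem:upper} in a simplified form, relying only on \cref{obs:obj} and avoiding the machinery of induced patterns. The defining property of a $\delta$-instance gives a demand $d_m \in \{\delta D, (1-\delta) D\}$; I would first rotate the instance so that this special demand becomes the last one.

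The first step is to delete $d_m$ and build a backward greedy pattern on the remaining $m-1$ demands whose end point is $e := \tfrac{1}{2}(D + d_m) - v_m$ and whose (resulting) start point is some $x_a \in [0, D]$. Re-adding the last demand extends this pattern in two possible ways, giving end points $y^1 := \tfrac{1}{2}(D - d_m)$ or $y^2 := \tfrac{1}{2}(D + d_m)$, both of which lie in $[0, D]$ because $d_m \leq D$. In analogy with the proof of \cref{lem:upper}, I would pick the extension so that $x_a$ and the resulting $y_a$ straddle $\tfrac{1}{2}D$: take $y_a := y^2$ if $x_a \leq \tfrac{1}{2}D$, and $y_a := y^1$ otherwise. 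By the reflection symmetry of the construction around $\tfrac{1}{2}D$, it suffices to treat the case $x_a \leq \tfrac{1}{2} D$ and $y_a = \tfrac{1}{2}(D + d_m)$.

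The second step is a direct application of \cref{obs:obj}. Using $a \geq 0$, $b \leq D$, together with $\tfrac{1}{2}(D + d_m) \leq x_a + y_a \leq D + \tfrac{d_m}{2}$, the additive performance is bounded by
\begin{equation*}
\max\bigl\{2b - x_a - y_a,\ x_a + y_a - 2a\bigr\} \leq \max\bigl\{\tfrac{3}{2}D - \tfrac{d_m}{2},\ D + \tfrac{d_m}{2}\bigr\}.
\end{equation*}
Substituting $d_m = (1-\delta)D$ yields the two values $(1 + \delta/2)D$ and $(3/2 - \delta/2)D$; substituting $d_m = \delta D$ yields the same two values in swapped roles. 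Since $\delta \leq \tfrac{1}{2}$ implies $1 + \delta/2 \leq 3/2 - \delta/2$, the maximum equals $(3/2 - \delta/2)D$ in both cases, proving the lemma.

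The main thing to verify carefully is the well-definedness of the construction: one must confirm that both extensions of the $(m-1)$-step backward greedy pattern really do produce valid backward greedy patterns for the original instance. This is the same \emph{crucial observation} exploited in the proof of \cref{lem:upper} and follows because the greedy tie-breaking rule is consistent at the extension step; I would expect this bookkeeping, rather than the bound itself, to be the only subtle point.
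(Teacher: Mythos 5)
Your proof is correct. The construction—rotating the special demand to the last position, building a backward greedy pattern on the shortened instance ending at $\frac{1}{2}(D+d_m)-v_m$, and then extending by $z_m \in \{v_m, -u_m\}$ according to whether $x_a \leq \frac{1}{2}D$—is precisely the one the paper reuses in its proof of \cref{lem:upper}, where it is explicitly attributed to Skutella's proof of \cref{lem:small}; your direct application of \cref{obs:obj} with the crude strip bounds $a \geq 0$, $b \leq D$ then yields $\max\bigl\{\frac{3}{2}D - \frac{d_m}{2},\, D + \frac{d_m}{2}\bigr\} = \bigl(\frac{3}{2} - \frac{\delta}{2}\bigr)D$ in both cases $d_m \in \{\delta D, (1-\delta)D\}$, and the "crucial observation" you flag does indeed hold since the backward step from either $y^1$ or $y^2$ strictly prefers landing at $\frac{1}{2}(D+d_m)-v_m$.
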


\begin{proof}[Proof of Theorem~\ref{thm:main}]
Let $\delta \in \left[0, \frac{1}{2}\right]$ be such that the given split routing solution is a $\delta$-instance. If $\delta \geq \frac{2}{5}$, the theorem follows from \cref{lem:small}, as $\left( \frac{3}{2} - \frac{\delta}{2}\right)D \leq 1.3D$. Otherwise, the theorem follows from \cref{lem:upper}, as $\left( \frac{7}{6} + \frac{\delta}{3} \right)D \leq 1.3D$.
\end{proof}

\section{Bounds for Small Instances}
\label{sec:bound}

In this section, we show lower and upper bounds for small instances of the \emph{Ring Loading Problem}. This is of interest, as strong lower bound examples seem to exist for fairly small values of $m$ (see~\cref{sec:it_lb}). Therefore, dealing with these cases conclusively might lead to a deeper understanding on the correct bounds.

The main result of this section are bounds on the maximal load increase while turning a split routing solution into an unsplittable solution for instances of the \emph{Ring Loading Problem} where at most $7$ demands are routed splittably.

\begin{theorem}
\label{thm:bound}
Let $m \geq 2$ be an integer. Any split routing solution to the \emph{Ring Loading Problem} with $m$ split demands can be turned into an unsplittable solution without increasing the load on any edge by more than
\begin{itemize}
\item $\left(1 + \varepsilon\right)D$, if $m \leq 6$ and
\item $\left(\frac{19}{18} + \varepsilon\right)D$, if $m = 7$,
\end{itemize}
for $\varepsilon \leq 5\times 10^{-6}$. Furthermore, there are instances of the \emph{Ring Loading Problem} with $m$ pairwise crossing demands with $L = L^* + D$, for $m \leq 6$, and $L = L^* + \frac{19}{18}D$, for $m = 7$.
\end{theorem}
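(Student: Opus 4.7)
The plan is to prove both the upper and lower bound of Theorem~\ref{thm:bound} for each $m \in \{2,\dots,7\}$ through a single mixed integer linear program (MILP) parameterised by $m$.

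First I would reduce: after applying Observation~\ref{obs:parallel} and deleting demands routed unsplittably, any split routing with at most $m$ split demands becomes an instance with $m$ pairwise crossing demands. Normalising $D = 1$, it is specified by $u_i, v_i \geq 0$ with $u_i + v_i \leq 1$, $i \in [m]$, and an unsplittable routing is a choice $z \in \prod_{i=1}^m \{v_i, -u_i\}$. By Observation~\ref{obs:obj} the additive performance of $z$ is $\max_{k \in [m]} |\sum_{i \leq k} z_i - \sum_{i > k} z_i|$. Hence the theorem reduces to bounding
\begin{equation*}
\gamma_m \;:=\; \sup_{u,v} \;\; \min_{z} \;\; \max_{k \in [m]} \left| \sum_{i=1}^{k} z_i - \sum_{i=k+1}^{m} z_i \right|.
\end{equation*}

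To compute $\gamma_m$ I would enumerate the $2^m$ sign patterns $z$. For a fixed pattern the cumulative quantity $P_k^z(u,v) := \sum_{i \leq k} z_i - \sum_{i > k} z_i$ is affine in $(u,v)$, and the requirement that its additive performance is at least $t$ takes the form of the disjunction $\bigvee_{k=1}^m \bigl(\{P_k^z \geq t\} \cup \{P_k^z \leq -t\}\bigr)$. Each disjunction is linearised with $2m$ binary indicators, a sum-at-least-one constraint, and standard big-$M$ inequalities; the MILP then maximises the scalar $t$ over the continuous $(u,v) \in [0,1]^{2m}$ subject to all $2^m$ disjunctions being active simultaneously. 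Solving it for $m = 2, \dots, 7$ yields the quoted upper bounds $\gamma_m \leq 1$ for $m \leq 6$ and $\gamma_7 \leq \tfrac{19}{18}$, the $\varepsilon$ arising purely from the solver's numerical tolerance.

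For the matching lower bounds, the optimal $(u,v)$ returned by the MILP themselves supply explicit worst-case instances attaining $\gamma_m$; the paper may accompany these with hand-crafted examples, in particular a pairwise-crossing instance on $14$ nodes realising $L = L^* + \tfrac{19}{18}D$. The main obstacle is purely computational: the MILP has $\Theta(m \cdot 2^m)$ binary variables, so $m = 7$ already sits near the practical frontier of standard MILP solvers and pushing the same direct approach to $m = 8$ (over $4000$ binaries) would demand considerably more work. A secondary difficulty is that a floating-point solution certifies the bound only up to tolerance $\varepsilon$; eliminating $\varepsilon$ entirely would require a symbolic or rational certificate, which is outside the scope of this approach.
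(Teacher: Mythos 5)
Your upper-bound argument is essentially the paper's: you reduce to $m$ pairwise crossing demands, fix $D = 1$, and maximize the minimum over all $2^m$ patterns of the additive performance via a big-$M$ MILP. The formulations differ only in detail---the paper tracks the pattern extrema $a_z, b_z$ with indicator variables $w_{z,i}^{\min}, w_{z,i}^{\max}$ and then computes $c_z = \max\{2b_z - y_z, y_z - 2a_z\}$ via~\cref{obs:obj}, whereas you linearise the disjunction $\bigvee_{k}\bigl(P_k^z \geq t \vee P_k^z \leq -t\bigr)$ directly---but both correctly model $\gamma_m = \sup_{u,v} \min_z \max_k |P_k^z|$, and either gives the quoted numerical bounds for $m \leq 7$.

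Your lower-bound argument, however, has a genuine gap. You assert that the optimal $(u,v)$ returned by the MILP ``themselves supply explicit worst-case instances attaining $\gamma_m$,'' but a split routing with large minimum additive performance is not automatically an instance of the \emph{Ring Loading Problem} with $L - L^* \geq \gamma_m D$: the given $(u,v)$ routing need not be an \emph{optimal} split routing for its underlying demand data, and a large forced load increase relative to that suboptimal routing says nothing about the increase relative to the minimum load $L^*$. The paper explicitly flags this issue at the start of~\cref{sec:it_lb} and resolves it with~\cref{lem:boost}: one augments the split routing with new parallel demands on adjacent node pairs to equalize all edge loads (so the given routing becomes an optimal split routing), then repeatedly subdivides edges so that these short demands are forced onto their own edge in any optimal unsplittable routing and never exceed value $D$. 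Without this conversion lemma, the MILP only certifies the existence of split routings with large additive performance, not the claimed instances with $L = L^* + D$ (for $m \leq 6$) and $L = L^* + \frac{19}{18}D$ (for $m = 7$). A secondary point: since the MILP output is floating point, exact hand-crafted split routings (the paper gives explicit integer examples for $m = 3, 5, 6, 7$) are needed for the exact lower bounds; otherwise the tolerance $\varepsilon$ would also degrade the lower bound to $1 - \varepsilon$, not just the upper bound to $1 + \varepsilon$.
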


Note that the dependency on $\varepsilon$ is unavoidable, as our technique for the upper bounds depends on solutions to large mixed integer linear programs that rely on floating point arithmetic. However, the theorem implies that $L \leq L^* + \alpha D$ for instances of the \emph{Ring Loading Problem} where an optimal split routing solution exists such that at most $m$ demands are routed splittably, with $\alpha = 1 + \varepsilon$ if $m \leq 6$ and $\alpha = \frac{19}{18} + \varepsilon$ if $m = 7$.

The remainder of this section is structured as follows. In~\cref{sec:milp_lb}, we introduce the mixed integer linear program that provides us with the upper bounds for $m \leq 7$. In~\cref{sec:it_lb}, we provide a more detailed view on lower bounds, some of which provide the matching lower bounds in~\cref{thm:bound}. On the way, we show a lemma that turns any split routing solution into an instance of the \emph{Ring Loading Problem} while retaining the load increase. We conclude the section by proving~\cref{thm:bound} in~\cref{sec:proof_bound}.

\subsection{Mixed Integer Linear Program}
\label{sec:milp_lb}

We introduce a mixed integer linear program (MILP) that outputs for a given integer $m$ a split routing solution with $m$ demands that cannot be turned into an unsplittable routing without increasing the load on some edge by at least $\alpha D$, for $\alpha \geq 0$ as large as possible. This gives us the claimed upper bounds in~\cref{thm:bound}.

We first introduce our notation. Let $\mathcal{P}$ be the set of all patterns, i.e.\ $z \in \mathcal{P}$ with $z = \left(z_1,\ldots,z_m\right)$ corresponds to a particular choice of values $z_i \in \left\{ v_i, -u_i\right\}$ for all $i \in \left[m\right]$. Note that $\left|\mathcal{P}\right| = 2^m$. Without loss of generality we assume that $D = 1$, as we can otherwise scale the instance. We furthermore assume that every pattern $z \in \mathcal{P}$ starts at $x_p = 0$. We use the following variables: For all $i \in \left[m\right]$, we denote with $u_i$ and $v_i$ the continuous variables that correspond to a particular split routing. For all patterns $z \in \mathcal{P}$ we refer to the maximum and minimum values obtained by the pattern with $a_z$ and $b_z$, respectively. For this purpose we furthermore have binary variables $w_{z,i}^{\min}$ and $w_{z,i}^{\max}$ that indicate at which position $i \in \left\{0,1,\ldots,m\right\}$ of pattern $z \in \mathcal{P}$ the minimal and maximal values are. In fact, $w_{z,i}^{\min} = 1$ if $i \in \argmin_{j \in \left\{0,\ldots,m\right\}} \sum_{k = 1}^j z_{j}$ (and equivalently for $w_{z,i}^{\max}$). The variables $y_z$ correspond to the end point and $c_z$ to the additive performance of pattern $z \in \mathcal{P}$. In order to decide where the maximum value of the additive performance is obtained (see~\cref{obs:obj}) we use a binary variable $w_z$. 

Our MILP formulation is as follows:
\begin{subequations}
\allowdisplaybreaks
\label{eq:milp}
\begin{align}
	 &  & \max\       &  & E                               &                                                               &  &                                                         &  & \notag                \\
	 &  & \text{s.t.} &  & E                               & \leq c_z,                                                     &  & \forall z \in \mathcal{P}                               &  & \label{eq:milp_aux}   \\
	 &  &             &  & u_i + v_i                       & \leq 1,                                                       &  & \forall i \in \left[m\right]                                       &  & \label{eq:milp_feas1} \\
	 &  &             &  & \textstyle\sum_{i = 0}^{m} w_{z,i}^{\min} & \geq 1,                                                       &  & \forall z \in \mathcal{P}                               &  & \label{eq:milp_min}   \\
	 &  &             &  & \textstyle\sum_{i = 0}^{m} w_{z,i}^{\max} & \geq 1,                                                       &  & \forall z \in \mathcal{P}                               &  & \label{eq:milp_max}   \\
	 &  &             &  & a_z                             & \leq \textstyle\sum_{j = 1}^{i} z_{j},                                &  & \forall z \in \mathcal{P}, \forall i \in \left\{0,\ldots,m\right\} &  & \label{eq:milp_aub}   \\
	 &  &             &  & a_z + W_1\cdot\left(1 - w_{z,i}^{\min}\right)                            & \geq \textstyle\sum_{j = 1}^{i} z_{j}, &  & \forall z \in \mathcal{P}, \forall i \in \left\{0,\ldots,m\right\} &  & \label{eq:milp_alb}   \\
	 &  &             &  & b_z                             & \geq \textstyle\sum_{j = 1}^{i} z_{j},                                &  & \forall z \in \mathcal{P}, \forall i \in \left\{0,\ldots,m\right\} &  & \label{eq:milp_blb}   \\
	 &  &             &  & b_z - W_1\cdot\left(1 - w_{z,i}^{\max}\right)                    & \leq \textstyle\sum_{j = 1}^{i} z_{j}, &  & \forall z \in \mathcal{P}, \forall i \in \left\{0,\ldots,m\right\} &  & \label{eq:milp_bub}   \\
	 &  &             &  & y_z                             & = \textstyle\sum_{j = 1}^{m} z_{j},                                    &  & \forall z \in \mathcal{P}                               &  & \label{eq:milp_y}     \\
	 &  &             &  & c_z                             & \geq 2 b_z - y_z,                                              &  & \forall z \in \mathcal{P}                               &  & \label{eq:milp_aplb1} \\
	 &  &             &  & c_z                             & \geq y_z -2 a_z,                                               &  & \forall z \in \mathcal{P}                               &  & \label{eq:milp_aplb2} \\
	 &  &             &  & c_z - W_2 w_z                             & \leq 2 b_z - y_z,                                    &  & \forall z \in \mathcal{P}                               &  & \label{eq:milp_apub1} \\
	 &  &             &  & c_z - W_2 \left(1 - w_z\right)                             & \leq y_z -2 a_z,                               &  & \forall z \in \mathcal{P}                               &  & \label{eq:milp_apub2} \\
	 &  &             &  & u_i, v_i                        & \geq 0,                                                        &  & \forall i \in \left\{0,\ldots,m\right\}                            &  & \label{eq:milp_posd} \\
	 &  &             &  & w_z, w_{z,i}^{\min}, w_{z,i}^{\max}                        & \in \left\{0,1\right\},                                                         &  & \forall z \in \mathcal{P}, \forall i \in \left\{0,\ldots,m\right\}                            &  & \label{eq:milp_bin}
\end{align}
\end{subequations}
Note that the $z_{i}$ values are no variables but place holders for either $-u_i$ or $v_i$ variables (depending on the particular pattern $z \in \mathcal{P}$). The constants $W_1$ and $W_2$ should be large enough such that \cref{eq:milp_alb,eq:milp_bub,eq:milp_apub1,eq:milp_apub2} are trivially satisfied if the respective binary term doesn't vanish. Because we assumed that $D = 1$, we can fix $W_1 = W_2 = m$. The basic idea is that we compute for each pattern $z \in \mathcal{P}$ the additive performance $c_z$ (\crefrange{eq:milp_min}{eq:milp_apub2}) and then ensure that the auxiliary variable $E$ is upper bounded by all of these values (\cref{eq:milp_aux}). The objective then maximizes $E$ in order to find feasible $u$ and $v$ values that maximize the minimum additive performance. More specifically, we first ensure that the split routing is feasible (\cref{eq:milp_feas1,eq:milp_posd}); secondly, we ensure that for each pattern $z \in \mathcal{P}$ the variables $a_z, b_z, y_z$ and $c_z$ are set to the correct values. To see this we will focus on the $a_z$ values, the other variables follow analogously. Let $z \in \mathcal{P}$ be an arbitrary but fixed pattern. We have to show that $a_z = \min_{i \in \left\{0,\ldots,m\right\}} \sum_{j = 1}^{i} z_{j}$. By \cref{eq:milp_aub} we know that $a_z \leq \min_{i \in \left\{0,\ldots,m\right\}} \sum_{j = 1}^{i} z_{j}$. If we now know that there exists an index $i \in \left\{0,\ldots,m\right\}$ such that $a_z \geq \sum_{j = 1}^{i} z_{j}$, the claim follows. This however is ensured by \cref{eq:milp_aub} together with \cref{eq:milp_min}, as there exists at least one index $i$ such that $w_{z,i}^{\min} = 1$ and thus $a_z \geq \sum_{j = 1}^i z_{j} - W_1\left(1 - w_{z,i}^{\min}\right) = \sum_{j = 1}^i z_{j}$.

This very na\"ive approach has many drawbacks, most prominently the massive amount of used binary variables (overall $\left(2m + 1\right)2^m$), the large number of constraints, many symmetries with respect to the $u_i$ and $v_i$ variables and the use of big-M constraints (\cref{eq:milp_alb,eq:milp_bub,eq:milp_apub1,eq:milp_apub2}). However, we only want to solve the MILP for small values of $m$.

The first issue can be improved by the following observation: Assume we are given a fixed pattern $z \in \mathcal{P}$ with the property that there exists an index $0 < i < m$ such that $z_{i} = v_{i}$ and $z_{i +1} = v_{i +1}$. Then the index $i$ will never contribute to a maximum or a minimum of the $a_z$ and $b_z$ variables, as $p_{z}(i -1) \leq p_{z}(i) \leq p_{z}(i+1)$. It is therefore not necessary to maintain the variables $w_{z,i}^{\min}$ and $w_{z,i}^{\max}$. In fact, the binary variable $w_{z,i}^{\min}$ has only to be maintained if $z_{i} = -u_i$ and $z_{i+1} = v_{i +1}$, and likewise $w_{z,i}^{\max}$ has only to be maintained if $z_{i} = v_i$ and $z_{i+1} = -u_{i +1}$, for $0 < i < m$. Similar arguments hold for the border cases $i \in \left\{0,m\right\}$. One can think of these restrictions as local minima and maxima for any fixed pattern. A local minima is $-u_i$ followed by $v_{i +1}$ (excluding the extreme cases), and equivalently a maxima is $v_i$ followed by $-u_{i+1}$. As all local minima and maxima are the same, independent of the specific $u$ and $v$ values, the global minima and maxima (depending on the concrete realization of $u$ and $v$) are obtained at one of those spots. Consequently every pattern has instead of $2m +1$ binary variables at most $m +1$, as each index is associated with at most one of the variables $w_{z,i}^{\min}$ and $w_{z,i}^{\max}$.

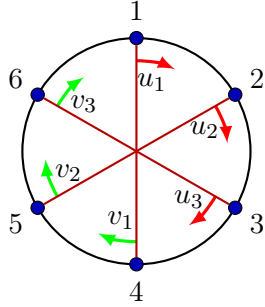
\begin{figure}
\centering
\def\radius{1.5cm}
\def\delangle{25}
\begin{tikzpicture}[label_style/.style={circle, inner sep = 0pt, label distance = 3pt}]


\draw[thick] (0,0) circle (\radius);

\foreach \i in {1,...,6} {
	\node[mynode, label={[label_style]{90 + (\i-1) * -360 / 6}:\i}] (\i) at ({90 - 360 / 6 * (\i-1)}:\radius) {};
}

\foreach \i/\u/\v in {1/u_1/,2/u_2/1,3/u_3/} {
	\draw[split_arc, split_col_u] ({90 - 360 / 6 * (\i -1)}:\radius - 1/5*\radius) arc[radius = \radius - 1/5*\radius, start angle = {90 - 360 / 6 * (\i -1)}, delta angle = -\delangle];

	\node[] at ({90 - 360 / 6 * (\i -1) - \delangle /2}:{\radius - 9/24*\radius}) {$\u$};
}

\foreach \i/\u/\v in {1//v_1,2//v_2,3//v_3} {
	\draw[split_arc, split_col_v] ({270 - 360 / 6 * (\i -1)}:\radius - 1/5*\radius) arc[radius = \radius - 1/5*\radius, start angle = {270 - 360 / 6 * (\i -1)}, delta angle = -\delangle];

	\node[] at ({270 - 360 / 6 * (\i -1) - \delangle /2}:{\radius - 9/24*\radius}) {$\v$};
}

\foreach \i/\j/\d in {1/4/4,2/5/4,3/6/4} {
	\path[mydemand] (\i) to[] (\j);
}
\end{tikzpicture}
\caption{A different view on split routing solutions to visualize necklace symmetries. By relabelling the nodes such that an arbitrary node has label $1$, while remaining nodes are labelled consecutive in clockwise direction, we obtain the same instance with $6$ different labels.}
\label{fig:symm_breaking}
\end{figure}

To break (at least some) symmetries, we added the constraints $u_1 \leq u_i$ and $u_1 \leq v_i$ for all $i \in \left[m\right]$. To see that these inequalities are valid, consider~\cref{fig:symm_breaking}. By relabelling the ring such that an arbitrary but fixed node has label $1$, while the remaining nodes are labelled consecutive in clockwise direction, we can assume that the split demand $u_1$ is the smallest among all other split demands. Note that the number of constraints increased only slightly, while the search space decreased significantly.


\paragraph{Implementation Details and Experiments}

We implemented the MILP with the reduced number of binary variables and the symmetry breaking constraints in C++ and solved it using \texttt{gurobi}~\cite{gurobi}. The computations were carried out on a computer cluster with two Xeon E5-2630 v4 CPU's and $512$GB RAM. Table~\ref{tab:milp} shows results of the experiments together with run times, memory consumption and the number of binary variables after gurobis preprocessing. We can see that the approach produced (almost) optimal solutions for all instances with $m \leq 7$. In particular we see that all split routing solutions of instances with $m \leq 6$ can be turned into unsplittable solutions while increasing the load on any edge by no more than $\left(1 + \varepsilon\right)D$. We furthermore see that all split routing solutions with $m =7$ can be turned into unsplittable solutions while increasing the load on any edge by at most $\left(\frac{19}{18} + \varepsilon\right)D$. As $m$ increases, the amount of used memory and run times are growing rapidly. Our approach is therefore incapable to output a solution for $m \geq 8$.

\begin{table}
\centering
\begin{tabular}{lllll}
\toprule
$m$ & add. perf. & \#bin. var. & time & memory \\
\toprule
$2-6$ & $1 + \varepsilon$ & $<226$ & $<3$ s & $<350$ MB\\
$7$ & $\frac{19}{18} + \varepsilon$ & $528$ & $36$:$25$ h & $105.1$ GB
\end{tabular}
\caption{Additive performance bounds on instances of different sizes together with required computing resources. The numerical error satisfies $\varepsilon \leq 5\times 10^{-6}$.}
\label{tab:milp}
\end{table}

\subsection{Lower Bound Examples}
\label{sec:it_lb}

Skutella disproved in~\cite{MR3463048} with a counterexample Schrijver et al.'s conjecture~\cite{MR1612841} that $L \leq L^* + D$. Particularly, Skutella found an instance of the \emph{Ring Loading Problem} on $16$ nodes and $18$ demands with $D = 10$ and $\delta = \frac{2}{5}$, where the optimum split routing has load $L^* = 39$ while an optimum unsplittable routing has load $L = L^* + D + 1 = 50$, overall providing an additive gap of $\frac{11}{10}D$ (see~\cref{fig:ce_8_10}). In this section we present further counterexamples that will extend the view on known results. We like to highlight that despite our best efforts, we were unable to find counterexamples that improve upon the additive gap of $\frac{11}{10}D$ given by Skutella~\cite{MR3463048}.

Note that in general, a split routing solution where the best additive performance is high with respect to $D$ does not yield a counterexample to Schrijver et al.'s conjecture, as they are not optimum. In fact an optimum split routing for an instance with $m$ pairwise crossing demands splits every demand evenly. Furthermore, the increase of load on some edge by $\alpha D$, for some $\alpha \geq 0$, does not imply that any unsplittable routing increases the load on the edge with the maximum load by $\alpha D$.

However, the following lemma justifies our restriction to split routing solutions in search of instances of the \emph{Ring Loading Problem} where the difference $L - L^*$ is large with respect to $D$. Note that this can be used in conjunction with the split routing solution of Schijver et al.~\cite{MR1612841} to provide a counterexample to their own conjecture, namely an instance of the \emph{Ring Loading Problem} with $L = L^* + \frac{101}{100}D$.

\begin{lemma}
\label{lem:boost}
Let $\alpha \geq 0$. Any split routing solution that cannot be turned into an unsplittable routing without increasing the load on some edge by at least $\alpha D$ can be turned into an instance of the \emph{Ring Loading Problem} with $L - L^* \geq \alpha D$.
\end{lemma}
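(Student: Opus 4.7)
My plan is to construct the instance $I'$ from $S$ by ``padding'' each edge to the maximum $S$-load. Writing $\ell_S(e)$ for the load of edge $e$ under $S$ and $L_S := \max_e \ell_S(e)$, I would set $g(e_k) := L_S - \ell_S(e_k) \geq 0$ and adjoin to $I$, for each edge $e_k$, a demand between its two endpoints of value $g(e_k)$ (splitting it into demands of value at most $D$ if $g(e_k) > D$, so that the maximum demand value is preserved). Extending $S$ by routing each padding demand along its incident edge would yield a split routing $S'$ of $I'$ with uniform load $L_S$ on every edge, so $L^*(I') \leq L_S$.

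For the reverse inequality, I would invoke an averaging argument: since the demands of $I$ are pairwise crossing with $n = 2m$ antipodal pairs, their contribution to the sum of edge loads is the constant $m \sum_i d_i$ for any split routing. The padding demands contribute their value at least once to the total, with equality when routed short, so every split routing of $I'$ has total edge load at least $n L_S$ and hence maximum load at least $L_S$. This gives $L^*(I') = L_S$.

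The main step is to show $L(I') \geq L_S + \alpha D$. Given an arbitrary unsplittable routing $U'$ of $I'$, I would let $U$ denote its restriction to the original demands and let $y_k \in \{0,1\}$ indicate whether the padding demand on $e_k$ is routed the long way. Applying the hypothesis to $U$, there is an edge $e^*$ with $\ell_U(e^*) \geq \ell_S(e^*) + \alpha D$. In the easy case $y_{e^*} = 0$, the padding on $e^*$ contributes its full value $g(e^*)$ to $\ell_{U'}(e^*)$, giving $\ell_{U'}(e^*) \geq \ell_S(e^*) + \alpha D + g(e^*) = L_S + \alpha D$.

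I expect the principal obstacle to be the case $y_{e^*} = 1$, where the padding on the bad edge is diverted the long way, removing $g(e^*)$ from $\ell_{U'}(e^*)$ but adding $g(e^*)$ to every other edge. Writing $\Delta_e := \ell_U(e) - \ell_S(e)$ and $G := \sum_k g(e_k) y_k$, the identity $\ell_{U'}(e) = L_S + \Delta_e + G - 2 g(e) y_e$, together with $\sum_e \Delta_e = 0$ and $\Delta_{e^*} \geq \alpha D$, would be my starting point. I would handle this case either by exhibiting another edge on which the accumulated contributions from rerouted paddings still push $\ell_{U'}$ up to $L_S + \alpha D$, or by comparing $U'$ with the routing obtained by flipping $y_{e^*}$ back to $0$ (for which the easy case applies) and bounding how much the maximum load can decrease by this switch. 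A pure averaging bound falls short of $L_S + \alpha D$, so this step will require the most care.
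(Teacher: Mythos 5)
Your construction (pad each edge with a demand of value $L_S - \ell_S(e)$) is the same first step as the paper's, and your averaging argument for $L^*(I') = L_S$ is correct and a bit more explicit than the paper's. But you are right that the hard case $y_{e^*} = 1$ is a genuine obstacle, and none of the repair strategies you sketch will close it. Flipping $y_{e^*}$ back to $0$ changes the load on $e^*$ by $+g(e^*)$ and on every other edge by $-g(e^*)$, and since the maximum of the flipped routing may sit on $e^*$ itself, knowing $\ell_{U''}(e^*) \geq L_S + \alpha D$ tells you nothing about $\max_e \ell_{U'}(e)$. The averaging identity $\ell_{U'}(e) = L_S + \Delta_e + G - 2g(e)y_e$ together with $\sum_e \Delta_e = 0$ and $\Delta_{e^*} \geq \alpha D$ only yields $\max_e \ell_{U'}(e) \geq L_S + \alpha D - g(e^*)$ in the worst configuration, which falls short when $0 < g(e^*) < \tfrac{n}{n-1}\alpha D$. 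Splitting a padding demand of value $> D$ into several parallel demands with the same endpoints does not help either; it only multiplies the number of cases.

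The missing idea is \emph{subdivision}: the paper inserts a new node $i'$ in the middle of each edge $\{i,i+1\}$ and replaces the padding demand $d_{i,i+1}$ of value $g$ by two demands $d_{i,i'}$ and $d_{i',i+1}$, each of value $g$. Because no other demand is incident to $i'$, the two half-edges $\{i,i'\}$ and $\{i',i+1\}$ carry the same ``background'' load, and rerouting a misrouted half-demand to its half-edge never increases the maximum load: if both halves go the long way, rerouting both leaves the loads on $\{i,i'\}$ and $\{i',i+1\}$ unchanged and strictly decreases all other loads; if only one goes the long way, rerouting it equalizes the loads on the two half-edges below their previous maximum and decreases everything else. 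This local argument lets one assume without loss of generality that every padding demand is routed short, after which your ``easy case'' gives $L(I') \geq L_S + \alpha D$. The same subdivision device is also how the paper reduces padding demands of value $> D$ without breaking the argument.
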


\def\rectsize{.75cm}
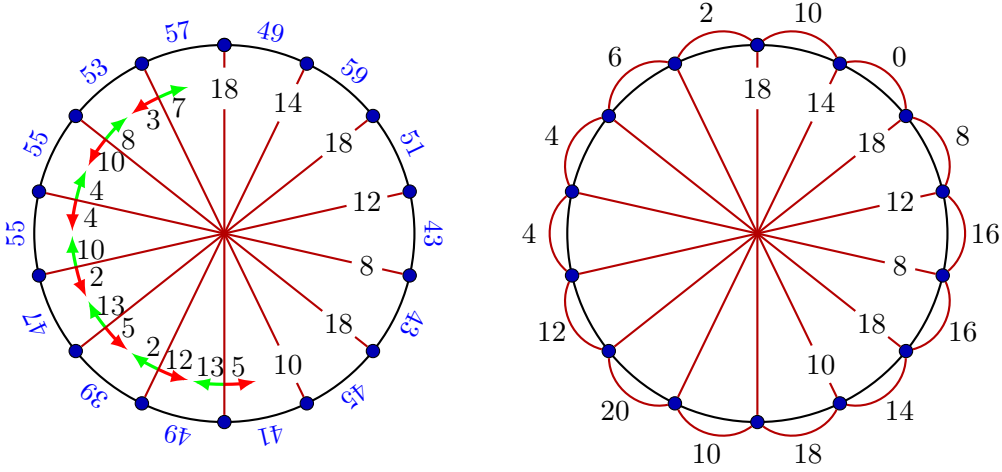
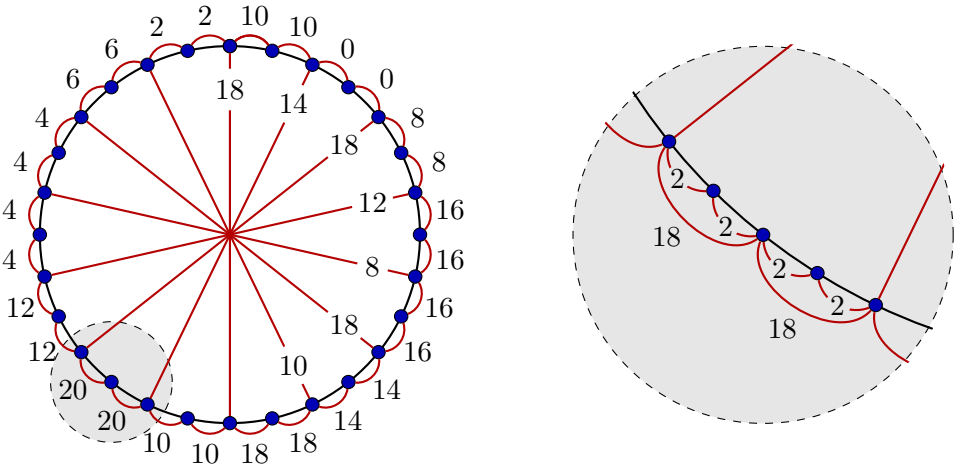
\begin{figure}[htbp]%
\centering
\subfloat[A split routing solution together with incurred edge loads.]{\label{fig:lb_ex_a}\begin{tikzpicture}[
]

\path[] (-\radius - \rectsize, -\radius -\rectsize) rectangle (\radius + \rectsize, \radius + \rectsize);

\draw[thick] (0,0) circle (\radius);

\foreach \i in {0,1,...,14} {
	\node[mynode] (\i) at (90 - 360 / 14 * \i:\radius) {};
}

\foreach \i/\l in {0/49,1/59,2/51,3/43,4/43,5/45,6/41,7/49,8/39,9/47,10/55,11/55,12/53,13/57} {
	\node[text = load_col, rotate = -360 / 28 - 360 / 14 * \i] at (90 - 360 / 28 - 360 / 14 * \i:11/10 * \radius) {\small$\l$};
}

\foreach \i/\u/\v in {0/5/13,1/12/2,2/5/13,3/2/10,4/4/4,5/10/8,6/3/7} {
	\draw[split_arc, split_col_u] (270 - 360 / 14 * \i:\radius - 1/5*\radius) arc[radius = \radius - 1/5*\radius, start angle = 270 - 360 / 14 * \i, delta angle = \delangle];
	\draw[split_arc, split_col_v] (270 - 360 / 14 * \i:\radius - 1/5*\radius) arc[radius = \radius - 1/5*\radius, start angle = 270 - 360 / 14 * \i, delta angle = -\delangle];

	\node[] at ({270 - 360 / 14 * \i + \delangle /2}:{\radius - 7/24*\radius}) {\u};
	\node[] at ({270 - 360 / 14 * \i - \delangle /2}:{\radius - 7/24*\radius}) {\v};
}

\foreach \i/\j/\d in {0/7/18,1/8/14,2/9/18,3/10/12,4/11/8,5/12/18,6/13/10} {
	\path[mydemand] (\i) to[] node[fill = white, circle, inner sep = 1pt, pos=.1] {$\d$} (\j);
}

%
\end{tikzpicture}}%
\quad
\subfloat[Introduction of short demands to equalize the loads to $59$.]{\label{fig:lb_ex_b}\begin{tikzpicture}[
]

\path[] (-\radius - \rectsize, -\radius -\rectsize) rectangle (\radius + \rectsize, \radius + \rectsize);

\draw[thick] (0,0) circle (\radius);

\foreach \i in {0,1,...,14} {
	\node[mynode] (\i) at (90 - 360 / 14 * \i:\radius) {};
}

%
%

\foreach \i/\j/\d in {0/7/18,1/8/14,2/9/18,3/10/12,4/11/8,5/12/18,6/13/10} {
	\path[mydemand] (\i) to[] node[fill = white, circle, inner sep = 1pt, pos=.1] {$\d$} (\j);
}

\foreach \i/\j/\d in {0/1/10,1/2/0,2/3/8,3/4/16,4/5/16,5/6/14,6/7/18,7/8/10,8/9/20,9/10/12,10/11/4,11/12/4,12/13/6,13/0/2} {
	\path[mydemand] (\i) to[bend left = 50] (\j);
}

\foreach \i/\j/\d in {0/1/10,1/2/0,2/3/8,3/4/16,4/5/16,5/6/14,6/7/18,7/8/10,8/9/20,9/10/12,10/11/4,11/12/4,12/13/6,13/14/2} {
	\node[] at ({90 - 360 / 14 * (\j + \i)/2}:{6/5 * \radius}) {$\d$};
}
\end{tikzpicture}}\\[-1ex]
\subfloat[Subdividing edges such that short demands are routed along their edge. The highlighted area contains demands of value $20 > D$.]{\label{fig:lb_ex_c}\begin{tikzpicture}[
]

\path[] (-\radius - \rectsize, -\radius -\rectsize) rectangle (\radius + \rectsize, \radius + \rectsize);

\draw[dashed, fill=white!90!black] (90 - 360 / 28 * 17:\radius) circle (.8cm);

\draw[thick] (0,0) circle (\radius);

\foreach \i in {0,1,...,28} {
	\node[mynode] (\i) at (90 - 360 / 28 * \i:\radius) {};
}

%
%

\foreach \i/\j/\d in {0/14/18,2/16/14,4/18/18,6/20/12,8/22/8,10/24/18,12/26/10} {
	\path[mydemand] (\i) to[] node[fill = white, circle, inner sep = 1pt, pos=.1] {$\d$} (\j);
}

\foreach \i [evaluate=\i as \j using {int{Mod(\i +1,28)}}] in {0,1,...,28} {
  \path[mydemand] (\i) to[bend left = 50] (\j);
}

\foreach \i/\j/\d in {0/1/10,1/2/0,2/3/8,3/4/16,4/5/16,5/6/14,6/7/18,7/8/10,8/9/20,9/10/12,10/11/4,11/12/4,12/13/6,13/14/2} {
\node[] at ({90 - 360 / 14 * (\j + \i)/2 - 360 / 56}:{7/6 * \radius}) {$\d$};
	\node[] at ({90 - 360 / 14 * (\j + \i)/2 + 360 / 56}:{7/6 * \radius}) {$\d$};
}

\end{tikzpicture}}%
\quad
\subfloat[Enlarged section of~\cref{fig:lb_ex_c} showing the subdivision of edges to reduce the demand value of short demands.]{\label{fig:lb_ex_d}\begin{tikzpicture}[scale=3.125
]

\begin{scope}[shift=({90 - 360 / 28 * 17:\radius}),scale=1/3.125]
\path[] (-\radius - \rectsize, -\radius -\rectsize) rectangle (\radius + \rectsize, \radius + \rectsize);
\end{scope}

\draw[dashed, fill=white!90!black] (90 - 360 / 28 * 17:\radius) circle (.8cm);

\clip (90 - 360 / 28 * 17:\radius) circle (.8cm + .5pt);

\draw[thick] (0,0) circle (\radius);

\foreach \i in {0,1,...,28} {
	\node[mynode] (\i) at (90 - 360 / 28 * \i:\radius) {};
}

\node[mynode] (mid1) at (90 - 360 / 28 * 16.5:\radius) {};
\node[mynode] (mid2) at (90 - 360 / 28 * 17.5:\radius) {};

%
%

\foreach \i/\j/\d in {0/14/18,2/16/14,4/18/18,6/20/12,8/22/8,10/24/18,12/26/10} {
	\path[mydemand] (\i) to[] node[fill = white, circle, inner sep = 1pt, pos=.1] {$\d$} (\j);
}

\foreach \i [evaluate=\i as \j using {int{Mod(\i +1,28)}}] in {0,1,...,28} {
  \path[mydemand] (\i) to[bend left = 80] (\j);
}

\path[mydemand] (16) to[bend left = 50] node[pos=.5,fill=white!90!black, circle, inner sep = .25pt] {$2$} (mid1);
\path[mydemand] (mid1) to[bend left = 50] node[pos=.5,fill=white!90!black, circle, inner sep = .25pt] {$2$} (17);

\path[mydemand] (17) to[bend left = 50] node[pos=.5,fill=white!90!black, circle, inner sep = .25pt] {$2$} (mid2);
\path[mydemand] (mid2) to[bend left = 50] node[pos=.5,fill=white!90!black, circle, inner sep = .25pt] {$2$} (18);

   
\foreach \i/\j/\d in {0/1/10,1/2/0,2/3/8,3/4/16,4/5/16,5/6/14,6/7/18,7/8/10,8/9/18,9/10/12,10/11/4,11/12/4,12/13/6,13/14/2} {
  \node[] at ({90 - 360 / 14 * (\j + \i)/2 - 360 / 56}:{10/9 * \radius}) {$\d$};
	\node[] at ({90 - 360 / 14 * (\j + \i)/2 + 360 / 56}:{10/9 * \radius}) {$\d$};
}
\end{tikzpicture}}%
\caption{An example how to create an instance of the \emph{Ring Loading Problem} from a given split routing, where the worst-case load increase is preserved. The split routing has $m = 7$ pairwise crossing demands with $D = 18$ and $\delta = \frac{8}{18}$ such that any unsplittable routing increases the load on some edge by at least $D +1 = 19$.}%
\label{fig:lb_ex}%
\end{figure}

\begin{proof}
An example of the following procedure can be found in~\cref{fig:lb_ex} with a split routing solution on $14$ nodes and $7$ non-zero demands such that any unsplittable routing increases the load on some edge by at least $\frac{19}{18}D$ for $D = 18$. 

Let $l:E \rightarrow \NN$ be the function that maps every edge to its load value with respect to the given split routing solution. We define $l_{\max} := \max_{e \in E} l(e)$ to be the maximum edge load. For every edge $\left\{i,i+1\right\}$ of the ring, we introduce a new demand $d_{i,i+1}$ of value $l_{\max} - l(\left\{ i,i+1\right\})$. We call the edge $\left\{i,i+1\right\}$ \emph{the edge} of demand $d_{i,i+1}$. If all new demands at distance one are routed unsplittably along their edge, we equalize all edge loads, as $d_{i,i+1} + l(\left\{i,i+1\right\}) = l_{\max}$. Because the load is the same on every edge, this configuration is also an optimum split routing for the newly created instance. \Cref{fig:lb_ex_b} shows the made changes to the instance.

There are two issues that can occur. First that not every unsplittable routing to this enhanced instance increases the edge load of an optimum split routing by the same amount as an unsplittable solution for the initial split routing did. This cannot happen, if the newly introduced short demands are routed along their short edge in an optimum unsplittable routing. And second that the introduced demands have a value larger than $D$, which would consequently reduce the maximum increase of edge load relative to $D$. 

The first issue is fixed by the following technique: We subdivide every edge, adding between every neighbouring pair of nodes $i$ and $i+1$ an additional node $i'$. All pairwise crossing demands remain in their current state, whereas the short demands $d_{i,i+1}$ are divided together with the edge. This means we delete the demand $d_{i,i+1}$ and introduce two new demands $d_{i,i'}$ and $d_{i',i+1}$ of the same value (see~\cref{fig:lb_ex_c}). Note that an optimum split routing has the same structure after the subdivision process. We now argue that any unsplittable routing can be turned into a different unsplittable routing where all short demands are routed along their respective edge, without increasing the load on any edge. As the node $i'$ is adjacent to the demands $d_{i,i'}$ and $d_{i',i+1}$ only, ignoring these demands ensures that the load on $\left\{i,i'\right\}$ and $\left\{i',i+1\right\}$ is the same. We consider two cases, first that both $d_{i,i'}$ and $d_{i',i+1}$ are routed the long way, and second that only one of them, say $d_{i,i'}$, is routed the long way while the other demand $d_{i',i+1}$ is routed on its edge. Assume therefore that both $d_{i,i'}$ and $d_{i',i+1}$ are routed the long way. Rerouting both of them to use their edges decreases the load on every edge different form $\left\{i,i'\right\}$ and $\left\{i',i+1\right\}$ by $2d_{i,i+1}$. The load on $\left\{i,i'\right\}$ and $\left\{i',i+1\right\}$ remains the same, implying the claim. If now only $d_{i,i'}$ is routed the long way, rerouting the demand decreases the load on every edge different from $\left\{i,i'\right\}$ by $d_{i,i+1}$, while the load on $\left\{i,i'\right\}$ is now increased to match the new load of $\left\{i',i+1\right\}$. Thus showing that the maximum load did not increase. We finally relabel all nodes from $1$ to $2n$ clockwise along the ring.

We now fix the second issue (see~\cref{fig:lb_ex_d}): We assume in the following, that in an optimum unsplittable routing the short demands are routed along the short edge (see the previous step). Note that only newly introduced demands may have a demand value larger than $D$. Let $d_{j,j+1} > D$ be such a demand. We again subdivide the edge, introducing a node $j'$ between $j$ and $j +1$ on the ring. For the pairwise crossing demands nothing changes. We introduce two new demands $d_{j,j'} = d_{j',j+1} = d_{j,j+1} -D$ and update the original demand value to $d_{j,j+1} = D$. Note that all demand values strictly decreased with respect to the original demand value $d_{j,j+1}$. It furthermore holds that the load on the edges $\left\{j,j'\right\}$ and $\left\{j',j+1\right\}$ is unchanged, if all new demands are routed along the short paths. By the assumption, we know that this is fulfilled for the \enquote{outer} demand $d_{j,j+1}$. For the two created small demands at distance one this follows from the same argumentation as described in the previous modification step. As the invariant is preserved, we can repeat this process until no more demands of value greater than $D$ remain.

Overall we obtain an instance of the \emph{Ring Loading Problem} with $L - L^* \geq \alpha D$, as the load increase of unsplittable routings on some edge by $\alpha D$ guarantees the increase of load by $\alpha D$ on an edge of maximum load.
\end{proof}

Skutella~\cite{MR3463048} found the split routing solution that led to the counterexample (see~\cref{fig:ce_8_10}) by brute force enumeration over instances with specific structural properties on $8$ pairwise crossing integer demands of value at most $D = 10$. We show in the remainder of this section how to modify this split routing solution in order to obtain further counterexamples that are $\delta$-instances for $0 < \delta \leq \frac{1}{2}$ (see also~\cref{fig:bounds}).

\begin{table}[htbp]
\centering
\renewcommand{\arraystretch}{1.1}
\begin{tabular}{l|ccccc}
\cline{2-6}
\rule{0pt}{1.1em}
& cond. &$D$ & $\delta$ & add. perf. & Ref. \\
\toprule
$\begin{aligned}u = (4,4,6,2,7,1,7,2)\ +\\(\varepsilon,\varepsilon,\varepsilon,0,\varepsilon,\varepsilon,\varepsilon,0)\end{aligned}$ & \multirow{2.5}{*}{$\varepsilon \geq 0$} & \multirow{2.5}{*}{$10 + 2\varepsilon$} & \multirow{2.5}{*}{$\dfrac{4}{D}$} & \multirow{2.5}{*}{$D+1 = \left(1 + \dfrac{\delta}{4}\right)D$} & \multirow{2.5}{*}{\cref{fig:ce_8_10}}\\
$\begin{aligned}v = (6,4,4,2,3,7,3,2)\ +\\(\varepsilon,\varepsilon,\varepsilon,0,\varepsilon,\varepsilon,\varepsilon,0)\end{aligned}$ & &  &  &  & \\
\midrule
$\begin{aligned}u = (4,4,6,2,7,1,7,2)\ +\\(\varepsilon,\varepsilon,\varepsilon,\varepsilon,\varepsilon,\varepsilon,\varepsilon,\varepsilon)\end{aligned}$ & \multirow{2.5}{*}{$\varepsilon \in [0,1]$} & \multirow{2.5}{*}{$10 + 2\varepsilon$} & \multirow{2.5}{*}{$\dfrac{4 + 2\varepsilon}{D}$} & \multirow{2.5}{*}{$D +1 = \left(\dfrac{7}{6} - \dfrac{\delta}{6}\right)D$} & \multirow{2.5}{*}{\cref{fig:ce_8_10}}\\
$\begin{aligned}v = (6,4,4,2,3,7,3,2)\ +\\(\varepsilon,\varepsilon,\varepsilon,\varepsilon,\varepsilon,\varepsilon,\varepsilon,\varepsilon)\end{aligned}$ &  & &  &  & \\
\midrule
$u = (7,11,6,10,6,8,5)$ & \multirow{2}{*}{--} & \multirow{2}{*}{$18$} & \multirow{2}{*}{$8/18$} & \multirow{2}{*}{$D+1 =19$} & \multirow{2}{*}{\cref{fig:lb_ex}}\\
$v = (11,3,12,2,2,10,5)$ & & & & &
\end{tabular}
\caption{Lower bound examples and their properties; addition of vectors is component-wise.}
\label{tab:it_res}
\end{table}

Consider the split routing solution in the first row of~\cref{tab:it_res} (see also~\cref{fig:ce_8_10} on the left). For $\varepsilon = 0$, the instance corresponds to one of the split routings given by Skutella~\cite{MR3463048}. By adding $\varepsilon \geq 0$ to certain split demands, the instance changes continuously such that $D = 10 + 2\varepsilon$ and $\delta = \frac{4}{D}$ while any unsplittable solution increases the load on some edge by at least $D +1$. To see that any unsplittable solution increase the load on some edge by at least $11 + 2\varepsilon$, one has to consider all of the $2^8$ different patterns and observe that its additive performance is at least $11 + 2\varepsilon$. In order to obtain the additive performance for a pattern, the minimum $a$, the maximum $b$ and the end point $y$ with respect to $\varepsilon$ has to be calculated and the maximum in~\cref{obs:obj} computed. One problem that occurs, is that the minimum and maximum values considered throughout might not be unique, as the dependence on $\varepsilon$ might change the results. The example in~\cref{fig:generic_lb_ex} illustrates this issue; the minimum $a$ for this pattern varies for different values of $\varepsilon$. The minimum is therefore either $-4 -\varepsilon$ or $-5$. In this particular case however, the additive performance of the pattern is either way at least $11 + 2\varepsilon$, as 
\begin{equation*}
\max \left\{ 2b -y, y - 2a\right\} = 
\begin{cases}\max \left\{ 11 + 2\varepsilon , 7 \right\}\text{, if } a = -5 \text{ and}\\
\max \left\{ 11 + 2\varepsilon, 5 + 2 \varepsilon \right\}\text{, if } a = -4 - \varepsilon,
\end{cases}
\end{equation*}
which is exactly $11 + 2\varepsilon$. A thorough analysis of the remaining cases reveals that the additive performance of all patterns is at least $11 + 2\varepsilon$.

For the second split routing solution given in the second row of~\cref{tab:it_res} (see also~\cref{fig:ce_8_10} on the right) a similar approach can be used to show that the additive performance of all patterns is at least $11 + 2\varepsilon$, where we need the fact that $\varepsilon \in [0,1]$.

\begin{figure}[htbp]
\centering
\begin{tikzpicture}[yscale=.2,xscale=1.35,
node style/.style={draw = black, fill = black, circle, inner sep = 0pt, minimum size = 3pt},
edge style/.style={draw},
line/.style={draw = black},
fwd/.style={draw,thick}]

\definecolor{a_col}{RGB}{0,190,0}
\definecolor{b_col}{RGB}{0,0,190}
\definecolor{c_col}{RGB}{190,0,0}

\node[] at (-.75,6.5) {$p_z(k) = $};

\foreach \c/\e[count=\i] in {0/,-4/-\varepsilon,0/,4/+\varepsilon,2/+\varepsilon,-5/,2/+\varepsilon,-5/,-3/} {
  \node[] (l\i) at (\i cm -1 cm,6.5) {$\c \e$};
}

\foreach \i/\text [evaluate=\i as \j using int(\i+1)] in {0/-u_1,1/+v_2,2/+v_3,3/-u_4,4/-u_5,5/+v_6,6/-u_7,7/+v_8} {
  \draw[edge style,-{latex}] (\i cm +2pt,7.5) to[looseness = 4,out=80,in=100] node[above] {$\text$} 
(\j cm -2pt,7.5);
}

\foreach \offset/\eps/\mi/\epf in {0/0/-5/4,14/2/-6/6} {

\begin{scope}[yshift=-\offset cm]

\node[rotate=90] at (-.75,0) {$\varepsilon = \eps$};


\foreach \y in {\mi,...,\epf}
    \draw[help lines] (8,\y cm) -- (0,\y cm);

\foreach \x in {1,...,8}
    \draw (\x,1pt) -- (\x,-3pt) node[anchor=north] {\x};

\path[line, {latex}-] (8.5,0) to[] node[anchor = north west,pos = 0] {$k$} (0,0) node[anchor=east] {$0$};

\path[line, -{latex}] (0,\mi cm -.5 cm) to[] (0,5 cm + \eps cm);

\ifthenelse{\eps = 2}
{
\node[draw=red,inner sep = 0pt, minimum size = 6pt, circle, very thick] at (1,-6) {};
}
{
\node[draw=red,inner sep = 0pt, minimum size = 6pt, circle, very thick] at (5,-5) {};
\node[draw=red,inner sep = 0pt, minimum size = 6pt, circle, very thick] at (7,-5) {};
}

\node[node style] (0) at (0,0) {};
\node[node style] (1) at (1,-4 cm -\eps cm) {};
\node[node style] (2) at (2,0) {};
\node[node style] (3) at (3,4 cm + \eps cm) {};
\node[node style] (4) at (4,2 cm + \eps cm) {};
\node[node style] (5) at (5,-5) {};
\node[node style] (6) at (6,2 cm + \eps cm) {};
\node[node style] (7) at (7,-5) {};
\node[node style] (8) at (8,-3) {};

\foreach \i [evaluate=\i as \j using int(\i+1)] in {0,...,7} {
  \path[edge style, fwd, a_col] (\i) to[] (\j);
}
\end{scope}
}
\end{tikzpicture}
\caption{A pattern for the unsplittable solution $z = (-u_1,v_2,v_3,-u_4,-u_5,v_6,-u_7,v_8)$ of the split routing solution in the first row of~\cref{tab:it_res} with $\varepsilon = 0$ (top) and $\varepsilon = 2$ (bottom). The minima of the two variations (highlighted circles) are different.}
\label{fig:generic_lb_ex}
\end{figure}
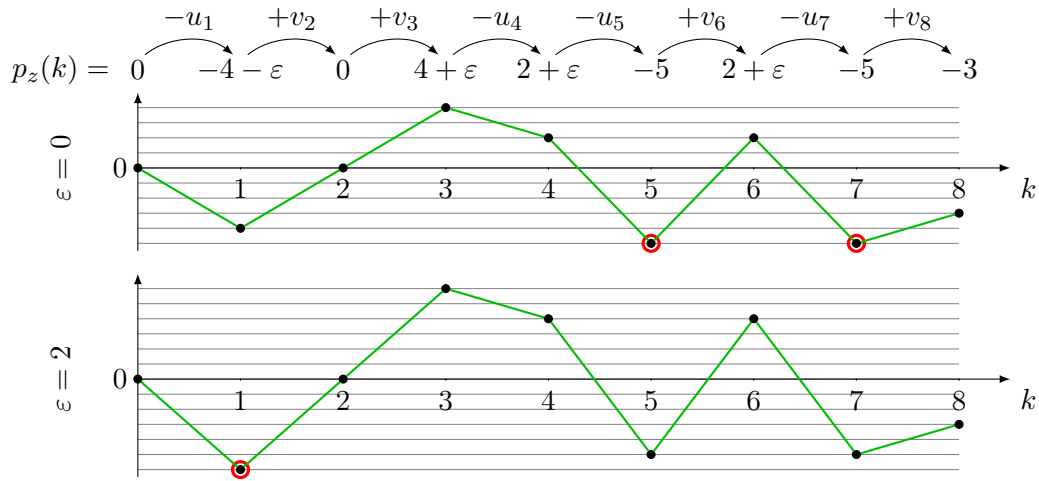


\def\radius{2.85cm}
\def\radiusoffset{5pt}
\def\delangle{11}

\def\radius{3cm}
\def\delangle{11}
\begin{figure}
\centering
\begin{tikzpicture}[
]

\draw[thick] (0,0) circle (\radius);

\foreach \i in {0,1,...,16} {
	\node[mynode] (\i) at (90 - 360 / 16 * \i:\radius) {};
}

\foreach \i/\l in {0/29,1/29,2/31,3/31,4/35,5/29,6/33,7/33,8/35,9/35,10/33,11/33,12/29,13/35,14/31,15/31} {
}

\foreach \i/\u/\v/\eps/\p in {0/4/6/\varepsilon/+,1/4/4/\varepsilon/+,2/6/4/\varepsilon/+,3/2/2//,4/7/3/\varepsilon/+,5/1/7/\varepsilon/+,6/7/3/\varepsilon/+,7/2/2//} {
	\draw[split_arc, split_col_u] (270 - 360 / 16 * \i:\radius - 1/24*\radius) arc[radius = \radius - 1/24*\radius, start angle = 270 - 360 / 16 * \i, delta angle = \delangle];
	\draw[split_arc, split_col_v] (270 - 360 / 16 * \i:\radius - 1/24*\radius) arc[radius = \radius - 1/24*\radius, start angle = 270 - 360 / 16 * \i, delta angle = -\delangle];

	\node[] at ({270 - 360 / 16 * \i + \delangle /2}:{\radius - 3/24*\radius}) {\small$\u$};
	\node[] at ({270 - 360 / 16 * \i - \delangle /2}:{\radius - 3/24*\radius}) {\small$\v$};
  
  \node[] at ({270 - 360 / 16 * \i + \delangle /2}:{\radius - 5.5/24*\radius}) {\small$\p$};
	\node[] at ({270 - 360 / 16 * \i - \delangle /2}:{\radius - 5.5/24*\radius}) {\small$\p$};
  
  \node[] at ({270 - 360 / 16 * \i + \delangle /2}:{\radius - 8/24*\radius}) {\small$\eps$};
 	\node[] at ({270 - 360 / 16 * \i - \delangle /2}:{\radius - 8/24*\radius}) {\small$\eps$};
}

\foreach \i/\j/\d/\eps/\rot in {0/8/10/\!+\!2\varepsilon/180,1/9/8/\!+\!2\varepsilon/0,2/10/10/\!+\!2\varepsilon/0,3/11/4//0,4/12/10/\!+\!2\varepsilon/0,5/13/8/\!+\!2\varepsilon/0,6/14/10/\!+\!2\varepsilon/0,7/15/4//0} {
	\path[mydemand] (\i) to[] node[fill = white, sloped, rotate = \rot, inner sep = 2pt, pos=.15] {$\d\eps$} (\j);
}



\end{tikzpicture}\qquad
\begin{tikzpicture}[
]

\draw[thick] (0,0) circle (\radius);

\foreach \i in {0,1,...,16} {
	\node[mynode] (\i) at (90 - 360 / 16 * \i:\radius) {};
}

\foreach \i/\l in {0/29,1/29,2/31,3/31,4/35,5/29,6/33,7/33,8/35,9/35,10/33,11/33,12/29,13/35,14/31,15/31} {
}

\foreach \i/\u/\v/\eps/\p in {0/4/6/\varepsilon/+,1/4/4/\varepsilon/+,2/6/4/\varepsilon/+,3/2/2/\varepsilon/+,4/7/3/\varepsilon/+,5/1/7/\varepsilon/+,6/7/3/\varepsilon/+,7/2/2/\varepsilon/+} {
	\draw[split_arc, split_col_u] (270 - 360 / 16 * \i:\radius - 1/24*\radius) arc[radius = \radius - 1/24*\radius, start angle = 270 - 360 / 16 * \i, delta angle = \delangle];
	\draw[split_arc, split_col_v] (270 - 360 / 16 * \i:\radius - 1/24*\radius) arc[radius = \radius - 1/24*\radius, start angle = 270 - 360 / 16 * \i, delta angle = -\delangle];

	\node[] at ({270 - 360 / 16 * \i + \delangle /2}:{\radius - 3/24*\radius}) {\small$\u$};
	\node[] at ({270 - 360 / 16 * \i - \delangle /2}:{\radius - 3/24*\radius}) {\small$\v$};
  
  \node[] at ({270 - 360 / 16 * \i + \delangle /2}:{\radius - 5.5/24*\radius}) {\small$\p$};
	\node[] at ({270 - 360 / 16 * \i - \delangle /2}:{\radius - 5.5/24*\radius}) {\small$\p$};
  
  \node[] at ({270 - 360 / 16 * \i + \delangle /2}:{\radius - 8/24*\radius}) {\small$\eps$};
 	\node[] at ({270 - 360 / 16 * \i - \delangle /2}:{\radius - 8/24*\radius}) {\small$\eps$};
}

\foreach \i/\j/\d/\eps/\rot in {0/8/10/\!+\!2\varepsilon/180,1/9/8/\!+\!2\varepsilon/0,2/10/10/\!+\!2\varepsilon/0,3/11/4/\!+\!2\varepsilon/0,4/12/10/\!+\!2\varepsilon/0,5/13/8/\!+\!2\varepsilon/0,6/14/10/\!+\!2\varepsilon/0,7/15/4/\!+\!2\varepsilon/0} {
	\path[mydemand] (\i) to[] node[fill = white, sloped, rotate = \rot, inner sep = 2pt, pos=.15] {$\d\eps$} (\j);
}



\end{tikzpicture}
\caption{Two examples of split routings with $8$ pairwise crossing demands and $D = 10 + 2\varepsilon$ such that any unsplittable routing increases the load on some edge by at least $D + 1 = 11 + 2\varepsilon$, for $\varepsilon \geq 0$ (left) and $\varepsilon \in [0,1]$ (right). The instances coincide with the split routing from Skutella~\cite{MR3463048} for $\varepsilon = 0$.}
\label{fig:ce_8_10}
\end{figure}


\subsection{Proof of Theorem~\ref{thm:bound}}
\label{sec:proof_bound}

With~\cref{sec:milp_lb,sec:it_lb} at hand, we can turn our attention to the proof of~\cref{thm:bound}.

\begin{proof}[Proof of~\cref{thm:bound}]
We see in~\cref{tab:milp} that any split routing solution with $m \leq 6$ can be turned into an usplittable solution while increasing the load on any edge by no more than $\left(1 + \varepsilon\right)D$ (after rescaling by $D$). For $m = 7$ the worst-case increase of load is $\left(\frac{19}{18} + \varepsilon\right)D$. Thus the upper bounds follow. The lower bounds can be produced with~\cref{lem:boost} together with the found split routing solutions given in~\cref{fig:ce_3_4,fig:ce_5_4,fig:ce_6_2,fig:lb_ex}. The remaining even cases follow with the same type of instance as~\cref{fig:ce_6_2}.
\end{proof}

\begin{figure}
\centering
\subfloat[$m = 3$, $D = 4$]{\label{fig:ce_3_4}\def\radius{1.4cm}\def\delangle{25}\begin{tikzpicture}[
]

\path[] (0,-3) -- (0,3);

\draw[thick] (0,0) circle (\radius);

\foreach \i in {0,1,...,6} {
	\node[mynode] (\i) at (90 - 360 / 6 * \i:\radius) {};
}

\foreach \i/\l in {0/4,1/6,2/8,3/8,4/6,5/4} {
	\node[text = load_col, rotate = -360 / 12 - 360 / 6 * \i] at (90 - 360 / 12 - 360 / 6 * \i:9/8 * \radius) {\small$\l$};
}

\foreach \i/\u/\v in {0/2/2,1/3/1,2/3/1} {
	\draw[split_arc, split_col_u] (270 - 360 / 6 * \i:\radius - 1/5*\radius) arc[radius = \radius - 1/5*\radius, start angle = 270 - 360 / 6 * \i, delta angle = \delangle];
	\draw[split_arc, split_col_v] (270 - 360 / 6 * \i:\radius - 1/5*\radius) arc[radius = \radius - 1/5*\radius, start angle = 270 - 360 / 6 * \i, delta angle = -\delangle];

	\node[] at ({270 - 360 / 6 * \i + \delangle /2}:{\radius - 7/24*\radius}) {\u};
	\node[] at ({270 - 360 / 6 * \i - \delangle /2}:{\radius - 7/24*\radius}) {\v};
}

\foreach \i/\j/\d in {0/3/4,1/4/4,2/5/4} {
	\path[mydemand] (\i) to[] node[fill = white, circle, inner sep = 1pt, pos=.1] {$\d$} (\j);
}



\end{tikzpicture}}%
\hfill
\subfloat[$m = 5$, $D = 4$]{\label{fig:ce_5_4}\def\radius{2.1cm}\def\delangle{17}\begin{tikzpicture}[
]

\path[] (0,-3) -- (0,3);

\draw[thick] (0,0) circle (\radius);

\foreach \i in {0,1,...,10} {
	\node[mynode] (\i) at (90 - 360 / 10 * \i:\radius) {};
}

\foreach \i/\l in {0/8,1/8,2/8,3/10,4/12,5/12,6/12,7/12,8/10,9/8} {
	\node[text = load_col, rotate = -360 / 20 - 360 / 10 * \i] at (90 - 360 / 20 - 360 / 10 * \i:11/10 * \radius) {\small$\l$};
}

\foreach \i/\u/\v in {0/2/2,1/2/2,2/2/2,3/3/1,4/3/1} {
	\draw[split_arc, split_col_u] (270 - 360 / 10 * \i:\radius - 1/5*\radius) arc[radius = \radius - 1/5*\radius, start angle = 270 - 360 / 10 * \i, delta angle = \delangle];
	\draw[split_arc, split_col_v] (270 - 360 / 10 * \i:\radius - 1/5*\radius) arc[radius = \radius - 1/5*\radius, start angle = 270 - 360 / 10 * \i, delta angle = -\delangle];

	\node[] at ({270 - 360 / 10 * \i + \delangle /2}:{\radius - 7/24*\radius}) {\u};
	\node[] at ({270 - 360 / 10 * \i - \delangle /2}:{\radius - 7/24*\radius}) {\v};
}

\foreach \i/\j/\d in {0/5/4,1/6/4,2/7/4,3/8/4,4/9/4} {
	\path[mydemand] (\i) to[] node[fill = white, circle, inner sep = 1pt, pos=.1] {$\d$} (\j);
}



\end{tikzpicture}}%
\hfill
\subfloat[$m = 6$, $D = 2$]{\label{fig:ce_6_2}\def\radius{2.2cm}\def\delangle{14}\begin{tikzpicture}[
]

\path[] (0,-3) -- (0,3);

\draw[thick] (0,0) circle (\radius);

\foreach \i in {0,1,...,12} {
	\node[mynode] (\i) at (90 - 360 / 12 * \i:\radius) {};
}

\foreach \i/\l in {0/6,1/6,2/6,3/6,4/6,5/6,6/6,7/6,8/6,9/6,10/6,11/6} {
	\node[text = load_col, rotate = -360 / 24 - 360 / 12 * \i] at (90 - 360 / 24 - 360 / 12 * \i:11/10 * \radius) {\small$\l$};
}

\foreach \i/\u/\v in {0/1/1,1/1/1,2/1/1,3/1/1,4/1/1,5/1/1} {
	\draw[split_arc, split_col_u] (270 - 360 / 12 * \i:\radius - 1/5*\radius) arc[radius = \radius - 1/5*\radius, start angle = 270 - 360 / 12 * \i, delta angle = \delangle];
	\draw[split_arc, split_col_v] (270 - 360 / 12 * \i:\radius - 1/5*\radius) arc[radius = \radius - 1/5*\radius, start angle = 270 - 360 / 12 * \i, delta angle = -\delangle];

	\node[] at ({270 - 360 / 12 * \i + \delangle /2}:{\radius - 7/24*\radius}) {\u};
	\node[] at ({270 - 360 / 12 * \i - \delangle /2}:{\radius - 7/24*\radius}) {\v};
}

\foreach \i/\j/\d in {0/6/2,1/7/2,2/8/2,3/9/2,4/10/2,5/11/2} {
	\path[mydemand] (\i) to[] node[fill = white, circle, inner sep = 1pt, pos=.1] {$\d$} (\j);
}



\end{tikzpicture}}%
\caption{Three split routing solutions with $m$ pairwise crossing demands of value $D$.}
\end{figure}

\section{Conclusions}
\label{sec:concl}

We showed that any split routing solution to the \emph{Ring Loading Problem} can be turned into an unsplittable solution while increasing the load on any edge by at most $\frac{13}{10}D$. We furthermore showed that split routing solutions with at most $7$ pairwise crossing demands cannot yield lower bounds with additive performance worse than $\left(\frac{19}{18} + \varepsilon\right)D$, for a small $\varepsilon$. On the way, we also proved that any split routing solution with large additive performance can be turned into an instance of the \emph{Ring Loading Problem} while maintaining the load increase. We also gave a broader view on instances where the difference $L - L^*$ is large with respect to $D$.

The obvious open problem is the correct value of additive load increase. Skutella~\cite{MR3463048} conjectured that $L \leq L^* + \frac{11}{10}D$, which is matched by the currently best lower bound instance. After spending numerous hours on finding a stronger lower bound, unfortunately without any success, we are tempted to believe that this conjecture might be true. In any case, we highly doubt that the current best upper bound is the definitive answer.

\section{Acknowledgements}

We thank Martin Skutella for introducing us to the \emph{Ring Loading Problem} and for many fruitful discussions and comments. We also thank Torsten Mütze for reading an early draft of the paper.

\bibliographystyle{alpha}
\bibliography{ref}
\end{document}